\documentclass[AMA,Times1COL]{WileyNJDv5} %STIX1COL,STIX2COL,STIXSMALL

\usepackage{acronym}
\usepackage{tikz}
\usetikzlibrary{tikzmark,decorations.pathreplacing,patterns,calc}
\usetikzlibrary{arrows, arrows.meta}
\usepackage{bm}

\usepackage{nth}
\usepackage{subcaption}
\usepackage[font=small]{caption}
\definecolor{amaranth}{rgb}{0.9, 0.17, 0.31}

\usepackage{acronym}

\acrodef{2d}[2D]{bi-dimensional}
\acrodef{5g}[5G]{fifth-generation}
\acrodef{6g}[6G]{sixth-generation}
\acrodef{aoa}[AoA]{angle of arrival}
\acrodef{ao}[AO]{alternating optimization}
\acrodef{aod}[AoD]{angle of departure}
\acrodef{ble}[BLE]{bluetooth low-energy}
\acrodef{bs}[BS]{base station}
\acrodef{cam}[CAM]{cooperative awareness message}
\acrodef{cav}[CAV]{cooperative automated vehicle}
\acrodef{c-its}[C-ITS]{cooperative intelligent transport system}
\acrodef{csi}[CSI]{channel state information}
\acrodef{cp}[CP]{canonical polyadic}
\acrodef{cfar}[CFAR]{constant false alarm rate}
\acrodef{ca-cfar}[CA-CFAR]{cell averaging \ac{cfar}}
\acrodef{ccrb}[CCRB]{constrained Cramér-Rao bound}
\acrodef{cdf}[CDF]{cumulative density function}
\acrodef{crb}[CRB]{Cramér-Rao bound}
\acrodef{dais}[DAIS]{delay-angle information spoofing}
\acrodef{dft}[DFT]{discrete Fourier transform}
\acrodef{ekf}[EKF]{extended Kalman filter}
\acrodef{esprit}[ESPRIT]{estimation of signal parameters via rotational invariance techniques}
\acrodef{ff}[FF]{far-field}
\acrodef{fft}[FFT]{fast Fourier transform}
\acrodef{fim}[FIM]{Fisher information matrix}
\acrodef{fr}[FR]{frequency range}

\acrodef{gnss}[GNSS]{global navigation satellite system}
\acrodef{gospa}[GOSPA]{generalized optimal sub-pattern assignment}
\acrodef{hd}[HD]{high-definition}
\acrodef{hap}[HAP]{high-accuracy positioning}
\acrodef{its}[ITS]{intelligent transportation system}
\acrodef{los}[LoS]{line-of-sight}
\acrodef{ls}[LS]{least squares}

\acrodef{mae}[MAE]{mean absolute error}
\acrodef{mcrb}[MCRB]{mismatched \ac{crb}}
\acrodef{mmWave}[mmWave]{Millimeter-wave}
\acrodef{mimo}[MIMO]{multiple-input multiple-output}
\acrodef{miso}[MISO]{multiple-input single-output}
\acrodef{mpc}[MPC]{multipath component}
\acrodef{ml}[ML]{maximum likelihood}
\acrodef{mf}[MF]{matched-filter}
\acrodef{nlos}[NLoS]{non-line-of-sight}
\acrodef{nf}[NF]{near-field}
\acrodef{nn}[NN]{nearest-neighbour}
\acrodef{ofdm}[OFDM]{orthogonal frequency division multiplexing}
\acrodef{omp}[OMP]{orthogonal matching pursuit}
\acrodef{peb}[PEB]{position error bound}
\acrodef{pla}[PLA]{physical layer authentication}
\acrodef{poam}[POAM]{penalized optimal assignment metric}
\acrodef{prmse}[PRMSE]{penalized root mean squared error}
\acrodef{prs}[PRS]{positioning reference signal}
\acrodef{roi}[ROI]{region of interest}
\acrodef{rf}[RF]{radio frequency}
\acrodef{rfc}[RFC]{radio frequency chain}
\acrodef{rrm}[RRM]{radio-reflective road marking}
\acrodef{rcs}[RCS]{radar cross section}
\acrodef{ris}[RIS]{reflective intelligent surface} 
\acrodef{rmse}[RMSE]{root mean square error}
\acrodef{rtt}[RTT]{round-trip time}
\acrodef{rsu}[RSU]{roadside unit}
\acrodef{simo}[SIMO]{single-input multiple-output}
\acrodef{siso}[SISO]{single-input single-output}
\acrodef{slam}[SLAM]{simultaneous localization and mapping}
\acrodef{snr}[SNR]{signal-to-noise ratio}
\acrodef{sota}[SoTA]{state-of-the-art}
\acrodef{sp}[SP]{scatter point}
\acrodef{srnf}[SR-NF]{single-reflector multipath \ac{nf}}
\acrodef{srs}[SRS]{sounding reference signal}
\acrodef{svd}[SVD]{singular value decomposition}
\acrodef{tdoa}[TDoA]{time difference of arrival}
\acrodef{toa}[ToA]{time of arrival}
\acrodef{tls}[TLS]{total least squares}
\acrodef{qcqp}[QCQP]{quadratically constrained quadratic program}

\acrodef{ue}[UE]{user equipment}
\acrodef{ul}[UL]{uplink}
\acrodef{ula}[ULA]{uniform linear array}
\acrodef{uwb}[UWB]{ultra-wideband}
\acrodef{ura}[URA]{uniform rectangular array}
\acrodef{v2x}[V2X]{vehicle-to-everything}
\acrodef{va}[VA]{virtual anchor}
\acrodef{vue}[VUE]{virtual \ac{ue}}
\acrodef{xl}[XL]{extremely large}

\articletype{Research Article}%

\received{Date Month Year}
\revised{Date Month Year}
\accepted{Date Month Year}
\journal{Journal}
\volume{00}
\copyyear{2026}
\startpage{1}

\newcommand{\PaperTitle}{Towards 6G Single-Anchor Vehicle Localization Exploiting Radio-Reflective Road Markings in Tunnel Environments}

\begin{document}

%\title{In-Tunnel Single-Anchor Localization Exploiting Near-Field and Radio-Reflective Road Markings}
\title{\PaperTitle}

\author[1]{Lorenzo Italiano}

\author[1]{Mattia Brambilla}

\author[2]{Monica Nicoli}

%\authormark{Italiano \textsc{et al.}}
\authormark{L. Italiano, M. Brambilla, M. Nicoli }
%\titlemark{In-Tunnel Single-Anchor Localization Exploiting Near-Field and Radio-Reflective Road Markings}
\titlemark{\PaperTitle}

\address[1]{\orgdiv{Dipartimento di Elettronica, Informazione e Bioingegneria}, \orgname{Politecnico di Milano}, \orgaddress{\state{Milan}, \country{Italy}}}

\address[2]{\orgdiv{Dipartimento di Ingegneria Gestionale}, \orgname{Politecnico di Milano}, \orgaddress{\state{Milan}, \country{Italy}}}

\corres{Corresponding author: Lorenzo Italiano \email{lorenzo.italiano@polimi.it}}

% \presentaddress{This is sample for present address text this is sample for present address text.}

%\fundingInfo{Text}
%\JELinfo{ejlje}

\abstract[Abstract]{Accurate vehicular localization remains a key challenge for cooperative intelligent transport systems (C-ITS), especially in areas without global navigation satellite system (GNSS) coverage, such as road tunnels. 
This paper proposes a novel vehicle positioning method with a single anchor equipped with multiple antennas, exploiting near-field (NF) propagation and passive radio-reflective structures deployed along the GNSS-denied tunnel. 
The method assumes a wideband vehicle-to-everything (V2X) communication between the vehicle and the anchor, in line with the undergoing standardization of cellular V2X beyond 5G.
We first derive the validity condition that allows us to approximate the multipath channel with a single reflector point,  defining a geometry validity bound on the number of antennas that can be employed. 
Building on this result, we propose JAVELIN, a 6G-compatible single-anchor localization framework that leverages tensor-based NF parameter estimation, adaptive NF/far-field (FF) processing, and recursive Bayesian tracking to enable sub-meter positioning without multi-anchor synchronization.
The method integrates angle, delay difference, and curvature measurements into a variable-dimension extended Kalman filter with gated nearest-neighbor  association, enabling operation without prior environmental knowledge. Radio-reflective road markings  are further introduced to enhance geometric diversity. Simulation results in realistic tunnel scenarios demonstrate accurate and robust localization under different conditions, outperforming state-of-the-art single-anchor approaches and benefiting from passive reflector deployment.}

\keywords{5G, 6G, localization, tunnel, single-anchor, near-field, reflectors}

% \jnlcitation{\cname{%
% \author{Taylor M.},
% \author{Lauritzen P},
% \author{Erath C}, and
% \author{Mittal R}}.
% \ctitle{On simplifying ‘incremental remap’-based transport schemes.} \cjournal{\it J Comput Phys.} \cvol{2021;00(00):1--18}.}

\maketitle

% \renewcommand\thefootnote{}
% \footnotetext{\textbf{Abbreviations:} ANA, anti-nuclear antibodies; APC, antigen-presenting cells; IRF, interferon regulatory factor.}

\renewcommand\thefootnote{\fnsymbol{footnote}}
\setcounter{footnote}{1}

\section{Introduction}\label{sec1}

\Ac{hap} is a key enabling technology for \acp{c-its} to support advanced services for  \acp{cav}~\cite{5gaa2021hap}.
While \ac{gnss} remains the de facto solution for open-sky scenarios,
its performance degrades severely in signal-blocked environments such as urban canyons, indoor facilities, and, in particular, road tunnels. In these scenarios, the absence of \ac{los}
satellite signals lead to large positioning errors or complete service outages, creating the need for complementary infrastructure-based localization solutions~\cite{saleh2026vehicularwirelesspositioning}.
This need is further highlighted by recent industry reports advocating the integration of heterogeneous sensing and positioning technologies, as well as network-assisted positioning services, to improve accuracy and situational awareness in connected and automated mobility~\cite{5gaa_positioning_paas}.

Several technologies have been proposed to address \ac{gnss}-denied vehicular positioning.
\Ac{uwb} systems provide high accuracy in short-range deployments~\cite{Piavanini2026roadsidepositioning, wen_automated_2020}, but they require dedicated and densely deployed infrastructure.
\Ac{ble} offers a lower-cost alternative, achieving meter-level accuracy when combined with novel features and advanced processing approaches~\cite{BLEtunnel, santra2024enhancing}.
LiDAR-based localization exploits recognized landmarks and lane markings with the support of digital maps~\cite{KimTunnel22}, while cooperative positioning has also been investigated as a means to improve accuracy~\cite{barbieri2024deep}.
In parallel, the recently standardized \ac{5g} advanced cellular technology, along with the ongoing evolution towards \ac{6g}, is expected to enable sub-meter positioning as a service by leveraging existing smart roads with \ac{v2x} infrastructure augmented by wideband and multi-antenna capabilities~\cite{italiano2025tutorial, SA-JURE}.
Currently standardized \ac{5g} positioning methods rely on geometric approaches for localization using time and angle, or measurements from multiple \acp{bs}, often under the assumption of perfect synchronized and \ac{los} propagation. However, in long tunnels, these assumptions may not hold, while the deployment of multiple anchors may be either costly or impractical. In the \ac{6g} vision, large antenna array deployments with \ac{nf} effects enable richer spatial information that can be exploited even in single-anchor configurations for accurate positioning. \Ac{nf} propagation modeling is more advanced than the conventional ray-based \ac{ff} model as it accounts for the spherical propagation wavefront rather than employing the conventional planar approximation. It provides a more accurate representation of radio propagation in short-range links with large antenna arrays, as envisioned for \ac{6g}-connected vehicle scenarios.

Exploiting \ac{nf} propagation with large antenna arrays provides richer geometric information on user location compared to classical \ac{ff} \ac{aoa} measurements~\cite{wang_near-field_2025, ebadi_near-field_2025}.
In fact, the \ac{nf} wavefront curvature enables the estimation of the transmit/receive source in three dimensions, potentially allowing single-anchor localization even in \ac{nlos} conditions.
This is not viable with conventional localization methods that require multiple anchors. Furthermore, single-anchor localization has the advantage of not requiring synchronization among anchors, thus reducing the complexity and cost of the infrastructure and intrinsically solving a practical impairment that affects current networks \cite{BarbieriTIM25}.

Multipath is widely recognized as a major localization impairment in indoor environments, such as tunnels. Recent studies on \ac{nf} localization in multipath environments have been presented in the literature~\cite{yang2025near}. A common modeling approach is the \ac{srnf} channel model, where the dominant \ac{nlos} paths between the \ac{ue} and all the elements of the antenna array at \ac{bs} are assumed to originate from a unique reflection point on a planar surface~\cite{TeNFiLoc, 10620236, 10955747}. Under this assumption, the reflector can be interpreted as the effective wave origin, and its position can be inferred from the spatial phase profile across the array. However, this model implicitly assumes that the reflection point is shared across all antenna elements. In general, this is not strictly true: the physically correct representation corresponds to a \ac{va}, which in the considered \ac{ul} scenario coincides with a \ac{vue}, whose location depends on specular geometry. The conditions under which the \ac{srnf} interpretation remains geometrically consistent have not been formally characterized.

\begin{figure}[t]
    \centering
    \begin{tikzpicture}
        \node[]at(0,0){\includegraphics[width=0.6\linewidth]{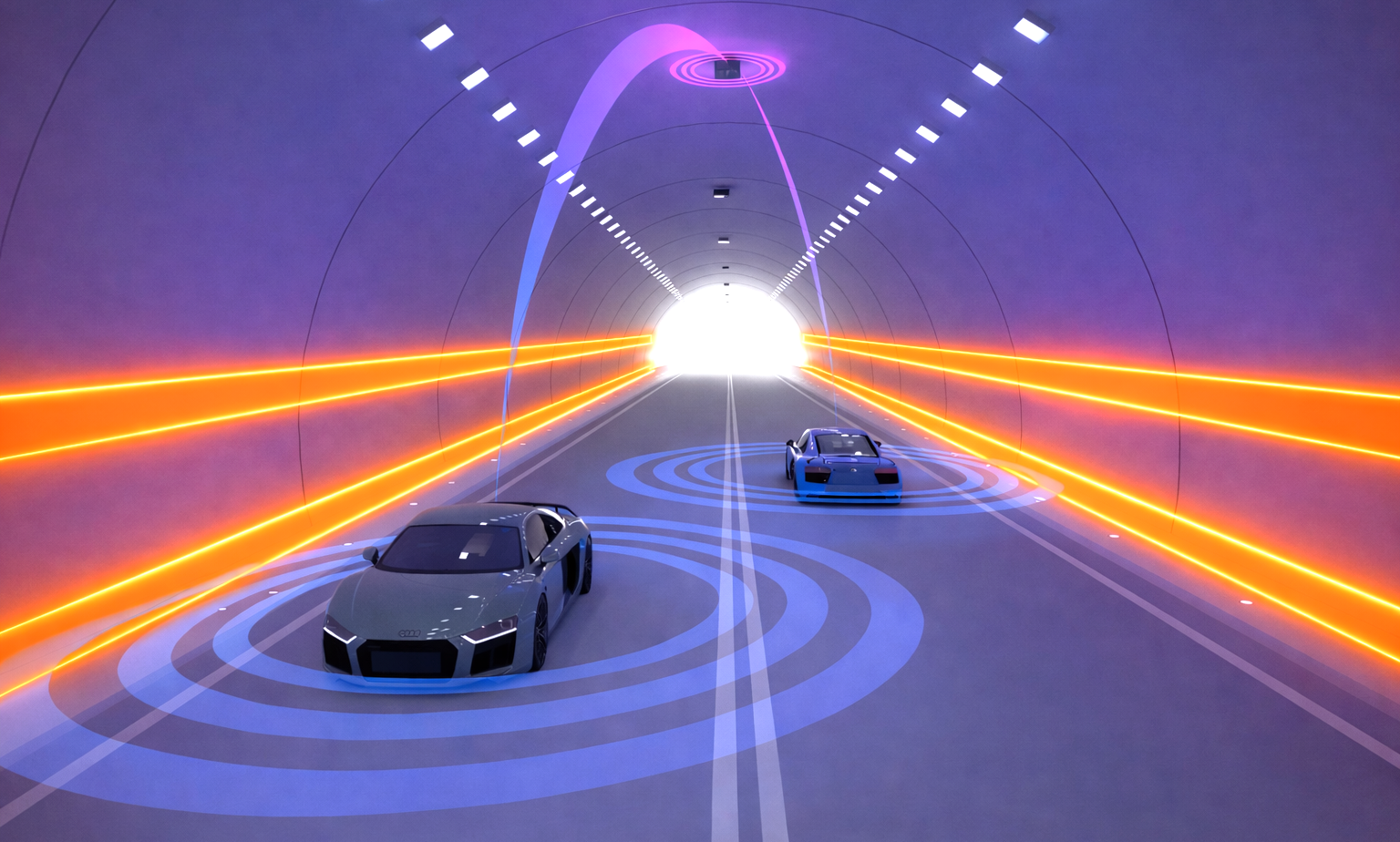}};
        %\node[white]at(0.2,2.2){BS/RSU};
    \end{tikzpicture}
    \caption{Tunnel localization scenario with two CAVs exploiting a single anchor for positioning. The orange panels are \acp{rrm} used to improve the vehicle localization accuracy.}
    \label{fig:tunnel_scenario}
\end{figure}

In this paper, we focus on vehicle localization in tunnel environments, where the vehicle is connected to a single anchor (which is indistinctly identified by a cellular \ac{bs} or \ac{rsu}  in our formulation) 
%cellular \ac{rsu} (here referred to as \ac{bs})
and position-related measurements are obtained from the radio link, accounting for both \ac{los} and reflected \ac{nlos} propagation paths. Reflections are controlled through the deployment of \acp{rrm}, or passive reflectors. The considered scenario, including an example of passive reflector deployment, is illustrated in Figure~\ref{fig:tunnel_scenario}. The contribution is twofold. First, we derive a geometric validity condition for the \ac{srnf} channel model in tunnel scenarios, addressing the lack of consistency conditions between physical reflectors and channel representations. We show that, for a given propagation distance and wavelength range, there exists a maximum array size for which the channel can be equivalently interpreted as generated by a single physical reflector (see Figure~\ref{fig:topview}). Beyond this threshold, the channel can only be represented via \acp{va}. Remarkably, the resulting bound follows the classical \ac{nf} scaling, revealing a direct connection between geometric consistency and Fresnel-region behavior. 
\begin{figure}
    \centering
    \include{figures/top_view_scenario}
    \vspace{-1cm}
    \caption{Top-view of the UE-BS communication scenario with single bounce reflection. The radio signal propagates from the UE to the multiple antennas at the BS through the reflected paths. The solid blue lines represent the true paths connecting the \ac{ue} to the \ac{bs} antenna elements; the dashed blue lines represent the corresponding paths originated from the specular image of the UE, i.e., the VUE; and the dotted red lines represent the paths assumed in the SR-\ac{nf} channel model, where all reflection points on the tunnel wall coincide at $\mathbf p_{s,0}$. }
    \label{fig:topview}
\end{figure}
As a second contribution, building on this theoretical foundation, we propose JAVELIN (\textit{Joint and Adaptive Virtual and Ego-user Localization In Near-field}), a \ac{6g}-oriented single-anchor localization framework tailored to vehicular tunnel environments. The method leverages \ac{nf} parameter extraction via tensor-based channel decomposition, combines the \ac{nf} and \ac{ff} regimes adaptively, and integrates measurements into an adaptive, variable-state \ac{ekf} using a gated \ac{nn}-based association strategy. This work extends our former proposal~\cite{SA-JURE} by exploiting \ac{nf} information to remove the need for digital maps and prior knowledge of reflector locations. 
Furthermore, we discuss the deployment of \acp{rrm}, enabling a low-cost, infrastructure-light localization paradigm. Unlike conventional multi-anchor solutions, \acp{rrm} provide passive geometric anchors that enhance positioning robustness while maintaining scalability, aligning with emerging smart-road deployment strategies.
Conceptually, these reflectors play a role analogous to road lane markings for human drivers: just as visual references guide the vehicle trajectory in low-visibility conditions, engineered reflectors provide geometric anchors that guide the positioning system in \ac{gnss}-denied environments~\cite{kakkavas2021position}.

The remainder of the paper is organized as follows.
Section~\ref{sec:system_model} introduces the system and channel models.
Section~\ref{sec:methodology} presents the proposed localization methodology and the \ac{srnf} validity theorems.
Section~\ref{sec:results} evaluates the performance in a realistic tunnel simulation environment.
Section~\ref{sec:conclusions} concludes the paper.

\subsection{Related Works}
In the following, we review the state-of-the-art methodologies for infrastructure-based localization suited for tunnel environments, as well as single-anchor positioning architectures.

Recent works investigating vehicle localization in tunnels have proposed exploiting \ac{tdoa} measurements obtained from commercial \ac{uwb} roadside infrastructure. Specifically, the approach in~\cite{wen_automated_2020} proposes a \ac{tdoa}-based architecture enabling real-time localization with multiple anchors and edge computation, achieving sub-meter accuracy even at relatively high vehicle speeds. However, the approach relies on \ac{los} geometric multilateration solved via \ac{ls}, which does not explicitly account for vehicle dynamics or measurement nonlinearities. Conversely, the work in \cite{PiavaniniTunnel25} introduces a more advanced probabilistic framework based on a nonlinear variational Bayes multiple model (N-VBMM), which jointly handles nonlinear measurement models and multiple motion hypotheses. This results in improved tracking robustness and positioning accuracy, particularly in complex driving scenarios such as lane changes or slalom maneuvers. Yet it requires a dense, perfectly synchronized deployment. The single-anchor alternative is infrastructure-efficient and does not require synchronization.

A first work investigating single-anchor positioning in tunnels is~\cite{halili2022vehicle}, which exploits \ac{v2x} communications to enable continuous localization by combining onboard sensor information with Doppler and \ac{toa} measurements at the roadside infrastructure. However, the framework primarily relies on direct-path components, neglecting multipath and \ac{los} obstruction, which may significantly degrade the performance in real environments.
To exploit multipath, the authors in~\cite{SA-JURE} propose a single-anchor vehicle localization methodology based on \acp{aoa} and single-anchor \acp{tdoa}, leveraging reflections from tunnel walls together with prior knowledge of the environment geometry. Reflectors are modeled as virtual anchors, and the vehicle and reflector states are jointly estimated through an \ac{ekf}, achieving robust performance even in \ac{nlos} conditions and without strict synchronization requirements. Nevertheless, the evaluation is conducted under simplified simulation settings (e.g., limited trajectories and regular tunnel geometries), and the method assumes accurate prior knowledge of the environment, which may limit its applicability in more complex or irregular scenarios.
Similarly, \cite{PengTunnelSBLOC25} proposes a multipath-assisted localization strategy in a single-anchor setup, introducing a dedicated multipath selection algorithm tailored for tunnel environments. The method filters out higher-order and non-wall reflections (e.g., clutter or ceiling reflections) using geometric constraints and estimated channel parameters (e.g., \ac{aoa}, \ac{toa}), thereby improving localization accuracy. Simulation results show a significant reduction in positioning error when the selection algorithm is applied. However, the approach assumes simplified tunnel geometries and relies on accurate multipath parameter estimation, which may be challenging in real-world deployments.

Extending the review of single-anchor positioning outside the tunnel context, the authors in~\cite{sun2021comparative} compare different \ac{ul} \ac{5g} positioning algorithms without explicitly addressing multipath exploitation or temporal dynamics. In~\cite{shamsian2023joint}, a joint single-anchor \ac{tdoa} and \ac{aoa} framework is proposed to estimate both the \ac{ue} and reflector positions through a two-step procedure; however, the separation between reflector and \ac{ue} estimation may introduce suboptimality and increased latency in dynamic scenarios. A real-world experimental validation is presented in~\cite{ge2024experimental}, where \ac{rtt}, \ac{aoa}, and \ac{aod} measurements are leveraged to perform \ac{slam} in a vehicular scenario; nevertheless, \ac{rtt}-based techniques typically require multiple message exchanges, which may limit their applicability in highly dynamic environments due to increased latency. In~\cite{bader2024leveraging}, the authors exploit single-bounce reflections to jointly estimate \ac{aoa} and \ac{aod} using large antenna arrays at both transmitter and receiver sides; while effective, this assumption may be impractical in many real-world deployments due to hardware and cost constraints.
Finally, \cite{nazari2023mmwave} proposes a high-resolution mmWave localization framework capable of estimating the full 6D user state from a single anchor using a snapshot of channel parameters. While the method achieves high accuracy by leveraging angular and delay information, it relies on a static snapshot model and does not incorporate temporal tracking or data association mechanisms, thus limiting its robustness in dynamic scenarios and in the presence of measurement uncertainty and missed detections.

Overall, the reviewed literature highlights three main trends: \textit{(i)} multi-anchor approaches achieving high accuracy through geometric or probabilistic inference, often at the cost of infrastructure complexity; \textit{(ii)} single-anchor tunnel-specific methods that exploit multipath as virtual anchors, but typically rely on simplified geometries or prior environmental knowledge; and \textit{(iii)} general single-anchor frameworks that either neglect multipath, adopt suboptimal multi-stage estimation strategies, or rely on impractical assumptions such as large antenna arrays or high-latency measurements. Moreover, snapshot-based solutions lack temporal tracking and robustness in dynamic scenarios, and typically do not exploit the additional information provided by \ac{nn} propagation. In the following sections, to address these limitations, we propose a unified framework that jointly estimates the \ac{ue} and environmental features over time, it explicitly exploits multipath and \ac{nn} propagation without requiring prior knowledge of the environment, and it also integrates adaptive data association and track management along with Bayesian filtering for dynamic management. This enables robust and scalable localization in complex dynamic tunnel scenarios using a single-anchor architecture.

\subsection{Contributions}
The contributions are summarized as follows:
\begin{itemize}

\item We propose the deployment of passive radio markers along roadways, enabling a scalable and infrastructure-light localization paradigm in which passive elements act as opportunistic virtual anchors, significantly enhancing the geometric diversity without requiring additional active infrastructure (i.e., multiple anchors).

\item We present the JAVELIN method, a robust single-anchor \ac{6g}-oriented positioning
framework tailored to tunnel environments, leveraging location-related parameter extraction and adaptive \ac{nf}/\ac{ff} processing without requiring digital map assistance or prior knowledge of reflector locations.

\item We derive the validity condition that allows to assume the \ac{srnf} channel model, establishing a constraint on the array size as a function of the propagation distance, wavelength, and specific environmental parameters. The result formally characterizes the transition between single-reflector and \ac{va}-based interpretations of \ac{nf} channels.

\item We integrate the extracted channel parameters into a recursive state-estimation architecture based on an adaptive \ac{ekf} with \ac{nn}-based data association, enabling seamless vehicular tracking in \ac{gnss}-denied conditions.

\item We carry out 
%extensive performance analyses 
numerical simulations
in a realistic tunnel environment, validating both the theoretical findings and the practical feasibility of the proposed approach.

\end{itemize}

\paragraph*{Notation}
Matrices are defined with bold uppercase (e.g., $\mathbf{A}$), vectors with bold lowercase (e.g., $\mathbf{a}$), and tensors with calligraphic (e.g., $\mathcal{A}$). Element indices are indicated with lowercase subscripts (e.g., $\mathbf{a}_i^{}$). The operations include transpose $(\cdot)^\mathsf{T}$, conjugate transpose $(\cdot)^\mathsf{H}$, and mode-$i$ tensor product ($\times_i$) between the $i$-th dimension of the first tensor and the \nth{2} dimension of the second one. The diag$(\cdot)$ operator is used to create a diagonal matrix from a vector (e.g., diag$(\mathbf{a})$). 
% The operator $\mathcal{R}(\cdot)$ denotes the range of a matrix, and vec$(\cdot)$ denotes vectorization. 
Moreover, $\left[\cdot\right]_j$ denotes unfolding over the $j$-th mode, $\Vert \cdot\Vert$ denotes the $\ell_2$ norm, $\mathbf{I}_M\in\mathbb{C}^{M\times M}$ is the identity matrix, and $\mathbf{0}_{M\times N}\in\mathbb{C}^{M\times N}$ is the all-zero matrix. The projector onto the column space of $\mathbf{A}$ is $\mathbf{P}_{\mathbf{A}}=\mathbf{A}\mathbf{A}^\dagger$, and the corresponding orthogonal projector is $\mathbf{P}^{\perp}_{\mathbf{A}}=\mathbf{I}-\mathbf{P}_{\mathbf{A}}$. Finally, $\mathbf{A}_{+}$ and $\mathbf{A}_{-}$ denote the matrix $\mathbf{A}$ without the first and last row, respectively.

\section{System Model}
\label{sec:system_model}
We consider an \ac{ul} scenario where a single anchor (e.g., a \ac{bs} or \ac{rsu}) with $M$ antennas is located at a known position and orientation inside a tunnel. 
The system model aligns with \ac{6g} architectures characterized by large-aperture antenna arrays and \ac{nf} propagation conditions, where spherical wavefront effects must be explicitly accounted for.
The \ac{ue}, equipped with a single antenna, is not synchronized with the \ac{bs} and travels through the tunnel while broadcasting kinematic information (including position and velocity) via \ac{v2x} messages~\cite{cam}. The position and velocity data are obtained from the vehicle’s onboard sensors (e.g., \ac{gnss} and speedometer). Upon entering the tunnel, however, reliable position information from \ac{gnss} becomes unavailable, whereas other kinematic measurements, such as velocity, remain valid. Since the reduced reliability of \ac{gnss} prevents the provision of accurate positioning information, our goal is to develop a methodology that serves as an alternative for seamless, accurate positioning within tunnels.
To this end, we opportunistically exploit radio-reflective road markings and other static structures within the tunnel environment, acting as additional virtual anchors. By leveraging the multipath components generated by these reflectors, the system effectively increases the available spatial diversity, enabling improved localization accuracy even in a single-anchor setup.

\subsection{Channel and Signal Model}
\label{sec:chsigmod}
We consider a  \ac{simo}-\ac{ofdm} wireless  channel composed of $L$ paths between a \ac{ue} and the \ac{bs}. The baseband equivalent channel response at subcarrier $s$, antenna $m$, and symbol $k$ is modeled as
\begin{equation}
    h^{}_{m,s}[k] = \sum^{L-1}_{\ell=0}\alpha^{}_\ell\, e^{j2\pi \frac{f^{}_c}{c}\delta^{}_{\ell,m}} \, e^{-j2\pi s\Delta f\tau^{}_\ell} \, e^{j2\pi k f^{d}_\ell T^{}_0},
    \label{eq:channel_model}
\end{equation}
where $\alpha^{}_\ell$ is the complex channel gain, $f^{}_c$ is the carrier frequency, $c$ is the speed of light, $\Delta f$ is the subcarrier spacing, $\tau^{}_\ell$ is the path delay, $f^{d}_\ell$ is the Doppler shift, and $T^{}_0$ is the sampling interval. The term $\delta^{}_{\ell,m}$ denotes the propagation distance offset for path $\ell$ at antenna element $m$ to the reference antenna (with $m=0$) and depends on the adopted wavefront model. Specifically, we adopt a spherical wavefront model and define
\begin{equation}
    \delta^{}_{\ell,m} \triangleq d^{}_{\ell,m} - d^{}_{\ell,0},
    \label{eq:delta_model}
\end{equation}
where $d^{}_{\ell,m}$ is the geometric path length from the transmitter to the $m$-th antenna along path $\ell$. In~\cite{TeNFiLoc}, the authors model $\delta^{}_{\ell,m} = \delta^{}_m(x^{}_\ell, y^{}_\ell, z^{}_\ell)$ as a function of the last reflection point. We refer to that channel model as the \ac{srnf} channel.

It is convenient to represent the channel in tensor form $\mathcal{H} \in \mathbb{C}^{M\times N^{}_f\times N^{}_t}$ admitting the Tucker, or \ac{cp}, decomposition as 
\begin{equation}
    \mathcal{H} = \mathcal{A}\times_1 \mathbf{B}^s\times_2 \mathbf{B}^f\times_3 \mathbf{B}^t,
    \label{eq:channel_tensor_model}
\end{equation}
being $\mathcal{A}$ diagonal with entries $\mathcal{A}^{}_{(\ell,\ell,\ell)}=\alpha_\ell$, \(\mathbf{B}^s = \begin{bmatrix}
    \mathbf{b}^{s}_0 \, \cdots \, \mathbf{b}^{s}_{L-1}
\end{bmatrix} \in \mathbb{C}^{M\times L}\) the spatial-domain steering matrix, \(\mathbf{B}^f = \begin{bmatrix}
    \mathbf{b}^{f}_0 \, \cdots \, \mathbf{b}^{f}_{L-1}
\end{bmatrix} \in \mathbb{C}^{N^{}_f\times L}\) the frequency-domain steering matrix, and 
\(\mathbf{B}^t= \begin{bmatrix}
    \mathbf{b}^{t}_0 \, \cdots \, \mathbf{b}^{t}_{L-1}
\end{bmatrix} \in \mathbb{C}^{N^{}_t\times L}\) the time-domain steering matrix, with
\begin{align}
    \mathbf{b}^{s}_\ell &= \begin{bmatrix}1 & e^{j2\pi \frac{f^{}_c}{c}\delta^{}_{\ell,1}} & \cdots & e^{j2\pi \frac{f^{}_c}{c}\delta^{}_{\ell,M-1}}\end{bmatrix}^\mathsf{T}, 
    \label{eq:bs}
    \\
    \mathbf{b}^{f}_\ell &= \begin{bmatrix}1 & e^{-j2\pi \Delta f\tau^{}_\ell} & \cdots & e^{-j2\pi \left(N^{}_f-1\right)\Delta f\tau^{}_\ell}\end{bmatrix}^\mathsf{T}, 
    \label{eq:bf}
    \\
    \mathbf{b}^{t}_\ell &= \begin{bmatrix}1 & e^{j2\pi T^{}_0 f^{d}_\ell} & \cdots & e^{-j2\pi \left(N^{}_t-1\right) T^{}_0 f^{d}_\ell}\end{bmatrix}^\mathsf{T}.
    \label{eq:bt}
\end{align}
The resulting received signal model in tensor form is:
\begin{equation}
\mathcal{Y} = \mathcal{H} \odot \mathcal{X} + \mathcal{Z},
\label{eq:rx_signal_tensor_model}
\end{equation}
where $\mathcal{Y}$, $\mathcal{X}$, and $\mathcal{Z}\in\mathbb{C}^{M\times N^{}_f\times N^{}_t}$ are third-order tensors indexed by $(m,s,k)$, $\mathcal{X}$ denotes the transmitted signal, and $\mathcal{Z}^{}_{(m,s,k)}\sim\mathcal{CN}(0,N^{}_0)$ denotes the noise, with $N^{}_0$ the noise power spectral density.
Assuming the transmitted signal is fully known at the receiver, the transmitted data can be removed from the received samples by element-wise division, resulting in
\begin{equation}
    \tilde{\mathcal{H}} = \mathcal{A}\times_1 \mathbf{B}^s\times_2 \mathbf{B}^f\times_3 \mathbf{B}^t + \tilde{\mathcal{Z}},
    \label{eq:channel_tensor_approx}
\end{equation}
where
\begin{align}
    \tilde{\mathcal{H}}^{}_{(m,s,k)} &= \mathcal{Y}^{}_{(m,s,k)}/\mathcal{X}^{}_{(m,s,k)},\\
    \tilde{\mathcal{Z}}^{}_{(m,s,k)} &= \mathcal{Z}^{}_{(m,s,k)}/\mathcal{X}^{}_{(m,s,k)}.
    \label{eq:channel_estimation}
\end{align}

\subsection{Location Parameter Estimation}
According to the modeling in Section~\ref{sec:chsigmod}, the $\ell$-th path is characterized by the unknown parameter vector \(\bm{\rho}^{}_\ell = \begin{bmatrix}
    \alpha^{}_\ell & \phi^{}_\ell & \psi^{}_\ell & \kappa^{}_\ell & d^{}_\ell & v^{}_\ell
\end{bmatrix}^\mathsf{T}\), where $\phi^{}_\ell$ and $\psi^{}_\ell$ are the azimuth and elevation \acp{aoa} of the $\ell$-th path, $\kappa^{}_\ell$ is the wavefront radius of curvature (which captures the geometric distance from the $\ell$-th wave origin), $d^{}_\ell = \tau^{}_\ell/c$ is the propagation distance of $\ell$-th path affected by the clock bias, and $v^{}_\ell = c\cdot f^{d}_\ell/f^{}_c$ is the relative velocity along the $\ell$-th path. 

For measuring the location parameters needed for localization, we consider the TeNFiLoc algorithm~\cite{TeNFiLoc}, using the \ac{cp} decomposition to estimate the steering matrices $\mathbf{B}^s$, $\mathbf{B}^f$, and $\mathbf{B}^t$, which must satisfy the \textit{Kruskal's uniqueness condition}. The \ac{cp} decomposition is essentially unique if $\text{kr}(\mathbf{B}^s)+\text{kr}(\mathbf{B}^f)+\text{kr}(\mathbf{B}^t) \ge 2L+2$, where $\text{kr}(\cdot)$ denotes the Kruskal rank. Let $\widehat{\mathbf{B}}^t$, $\widehat{\mathbf{B}}^f$, and $\widehat{\mathbf{B}}^t$ denote the steering matrix estimates obtained as
\begin{equation}
    \{\widehat{\mathbf{B}}^s,\,\widehat{\mathbf{B}}^f,\,\widehat{\mathbf{B}}^t\} \leftarrow \text{CPD}(\tilde{\mathcal{H}}, L),
\end{equation}
with $\text{CPD}(\cdot)$ the \ac{cp} decomposition function.
Since all steering matrices have unitary first row (see \eqref{eq:bs}, \eqref{eq:bf}, \eqref{eq:bt}), we can estimate the path gain as
\begin{equation}
    \widehat{\alpha}^{}_\ell = \widehat{\mathbf{B}}^s_{(0,\ell)} \cdot \widehat{\mathbf{B}}^f_{(0,\ell)} \cdot \widehat{\mathbf{B}}^t_{(0,\ell)}, \quad \forall \, 0 \leq \ell < L.
\end{equation}
Given a generic steering matrix $\widehat{\mathbf{B}}$, the scaling ambiguity can be resolved as \(\widehat{\mathbf{B}} \leftarrow \widehat{\mathbf{B}} \cdot \text{diag}^{-1}\left(\widehat{\mathbf{B}}^{}_{(0,:)}\right)\).
Each matrix is then Vandermonde, and we compute the roots by exploiting the shift-invariance property as
\begin{equation}
    \widehat{\bm \mu} = \angle \left(\text{diag}\left(\widehat{\mathbf{B}}^{\mathsf{H}}_{-} \widehat{\mathbf{B}}_{+}\right)\right).
\end{equation}
Thereby, $\widehat{\mathbf d}=[\widehat{d}^{}_0 \cdots \widehat{d}^{}_{L-1}]$ and $\widehat{\mathbf v}=[\widehat{v}^{}_0 \cdots \widehat{v}^{}_{L-1}]$ can be estimated as follows
\begin{align}
    \widehat{\mathbf d} &= - \frac{c}{2\pi \Delta f} \angle \left(\text{diag}\left((\widehat{\mathbf{B}}^f_{-})^\mathsf{H}\widehat{\mathbf{B}}^f_{+}\right)\right),\\
    \widehat{\mathbf v} &= \frac{c}{2\pi f^{}_c T^{}_0} \angle \left(\text{diag}\left((\widehat{\mathbf{B}}^t_{-})^\mathsf{H}\widehat{\mathbf{B}}^t_{+}\right)\right).
\end{align}
The estimation of $\phi^{}_\ell$, $\psi^{}_\ell$, and $\kappa^{}_\ell$ consists of three steps, summarized below.
\subsubsection{Phase Unwrapping}
The estimate of the path difference is obtained by unwrapping the phase of the steering vector $\widehat{\mathbf{b}}^s_\ell$ as
\begin{equation}
    \widehat{\bm\delta}^{}_\ell = \frac{c}{2\pi f^{}_c}\mathcal{U}\left(\angle\widehat{\mathbf{b}}^s_\ell\right),
\end{equation}
where $\mathcal{U}(\cdot)$ denotes the 2D phase-unwrapping function~\cite{Herraez2002fast2Dphaseunwrap}.

\subsubsection{Linear System Solution}
Let $\mathbf{p}^{}_\ell = (x^{}_\ell, y^{}_\ell, z^{}_\ell)$ be the virtual wave origin (which corresponds to the last reflector in the \ac{srnf} model), and the reference antenna placed in the origin. Then, it is:
\begin{equation}
    \widehat{\delta}^{}_{\ell,m}+\kappa^{}_\ell = \sqrt{(x^{}_m-x^{}_\ell)^2+(y^{}_m-y^{}_\ell)^2+(z^{}_m-z^{}_\ell)^2}.
\end{equation}
Following the expansion and the definition of the linear system in~\cite{TeNFiLoc}, and using the \ac{ura} version to obtain a full-column-rank matrix, we have
\begin{equation}
    2\underbrace{\begin{bmatrix}
    x^{}_1 & y^{}_1 & \widehat{\delta}^{}_{\ell,1} \\
    x^{}_2 & y^{}_2 & \widehat{\delta}^{}_{\ell,2} \\
    \vdots & \vdots & \vdots \\
    x^{}_{M-1} & y^{}_{M-1} & \widehat{\delta}^{}_{\ell,M-1}
    \end{bmatrix}}_{\mathbf{A}^{}_\ell \in \mathbb{R}^{(M-1)\times3}} \mathbf{x}^{}_\ell = \underbrace{\begin{bmatrix}
    r^2_1-\widehat{\delta}^{2}_{\ell,1} \\
    r^2_2-\widehat{\delta}^{2}_{\ell,2} \\
    \vdots \\
    r^2_{M-1}-\widehat{\delta}^{2}_{\ell,M-1}
    \end{bmatrix}}_{\mathbf{y}^{}_\ell \in \mathbb{R}^{(M-1)\times1}},
\end{equation}
where $r^2_m=x^2_m+y^2_m+z^2_m$ is the squared distance between the $m$-th antenna and the reference antenna. Moreover,
\begin{equation}
\mathbf{x}^{}_\ell = 
\begin{bmatrix}
x^{}_\ell \\
y^{}_\ell \\
\kappa^{}_\ell
\end{bmatrix} = 
\begin{bmatrix}
\kappa^{}_\ell \cos\phi^{}_\ell \sin\psi^{}_\ell\\
\kappa^{}_\ell \sin\phi^{}_\ell \sin\psi^{}_\ell\\
\kappa^{}_\ell
\end{bmatrix}.
\label{eq:x_l}
\end{equation}
Note that the antenna coordinates must be expressed in a local coordinate system in which the antenna array lies in the $xy$-plane, i.e., $z_m=0$ $\forall m$.
Since both $\mathbf{A}_\ell$ and $\mathbf{y}^{}_\ell$ include the noisy term $\widehat{\delta}_{\ell,m}$, we estimate $\mathbf{x}_\ell$ via \ac{tls} solution using the augmented matrix
    $\begin{bmatrix}
        2\mathbf{A}_\ell & -\mathbf{y}_\ell
    \end{bmatrix}$
for the \ac{svd}.
In the \ac{ff} regime, we define
$\mathbf{A}^\mathrm{FF}_\ell=\begin{bmatrix}\mathbf{A}_\ell\end{bmatrix}_{(:,1:2)}$ and
$\mathbf{x}^\mathrm{FF}_\ell=\begin{bmatrix}\cos\phi_\ell \sin\psi_\ell & \sin\phi_\ell \sin\psi_\ell\end{bmatrix}^\mathsf{T}$,
and solve via \ac{ls}.

\subsubsection{Parameter Extraction}
We estimate $\phi^{}_\ell$, $\psi^{}_\ell$, and $\kappa^{}_\ell$ as
\begin{align}
    \widehat{\kappa}^{}_\ell &= [\widehat{\mathbf{x}}^{}_\ell]_{(3)} ,\\
    \widehat{\phi}^{}_\ell &= \text{atan}\left([\widehat{\mathbf{x}}^{}_\ell]^{}_{(2)},[\widehat{\mathbf{x}}^{}_\ell]^{}_{(1)}\right) , \\
    \widehat{\psi}^{}_\ell &= \cos^{-1}\left(\frac{\sqrt{[\widehat{\mathbf{x}}^{}_\ell]^2_{(1)}+[\widehat{\mathbf{x}}^{}_\ell]^2_{(2)}}}{\kappa^{}_\ell}\right),
\end{align}
or, in \ac{ff}, as
\begin{align}
    \widehat{\phi}^\text{FF}_\ell &= \text{atan}\left([\widehat{\mathbf{x}}^\text{FF}_\ell]^{}_{(2)},[\widehat{\mathbf{x}}^\text{FF}_\ell]^{}_{(1)}\right) ,\\
    \widehat{\psi}^\text{FF}_\ell &= \cos^{-1}\left(\sqrt{[\widehat{\mathbf{x}}^\text{FF}_\ell]^2_{(1)}+[\widehat{\mathbf{x}}^\text{FF}_\ell]^2_{(2)}}\right).
\end{align}
These parameters will be used in the next section for localization.

% \begin{center}
% \begin{table*}[!h]%
% \caption{This is sample table caption.\label{tab1}}
% \begin{tabular*}{\textwidth}{@{\extracolsep\fill}lllll@{}}
% \toprule
% &\multicolumn{2}{@{}l}{\textbf{Spanned heading$^{\tnote{\bf a}}$}} & \multicolumn{2}{@{}l}{\textbf{Spanned heading$^{\tnote{\bf b}}$}} \\\cmidrule{2-3}\cmidrule{4-5}
% \textbf{Col1 head} & \textbf{Col2 head}  & \textbf{Col3 head}  & {\textbf{Col4 head}}  & \textbf{Col5 head}   \\
% \midrule
% col1 text & col2 text  & col3 text  & 12.34  & col5 text\tnote{1}   \\
% col1 text & col2 text  & col3 text  & \hphantom{0}1.62  & col5 text\tnote{2}   \\
% col1 text & col2 text  & col3 text  & 51.809  & col5 text   \\
% \bottomrule
% \end{tabular*}
% \begin{tablenotes}%%[341pt]
% \item[$^{\rm a}$] Example for a first table footnote.
% \item[$^{\rm b}$] Example for a second table footnote.
% \item {\it Source}: Example for table source text.
% \end{tablenotes}
% \end{table*}
% \end{center}

\section{Localization Methodology}
\label{sec:methodology}

\subsection{Scenario and Assumptions}

We consider a \ac{bs} located at the origin, at the midpoint of a straight tunnel segment. It is equipped with two antenna arrays aligned with the tunnel longitudinal axis and pointing in opposite driving directions. The tunnel is assumed to have a semicircular cross-section that remains constant along the longitudinal axis. Accordingly, the tunnel cross-sectional plane (i.e., the plane containing the semicircle) is orthogonal to the array boresight direction.
Moreover, we assume that the number of paths, $L$, is known at the \ac{bs}, and that the vehicle velocity, estimated from onboard sensors, is shared via \ac{v2x}.
We adopt the \ac{srnf} channel model proposed in~\cite{TeNFiLoc}, while enforcing the array-aperture constraints required for the validity of the spherical \ac{nf} formulation. In particular, we impose an upper bound on the array size to ensure that the \ac{srnf} approximation remains accurate; the corresponding theoretical conditions are provided below.
Figure~\ref{fig:topview} shows a top view of the considered scenario and highlights the geometric setup used in the following theorem.

\begin{theorem}[\textbf{Validity condition for a 2D \ac{srnf} channel model}]
Consider a \ac{bs} equipped with a \ac{ula} of $M$ antennas with inter-element spacing $d=\lambda/2$, operating at wavelength $\lambda$, and a \ac{ue} located at a longitudinal (i.e., along the road x-axis) distance $R$ from the \ac{bs}.
Assume that the dominant \ac{nlos} path from the \ac{ue} to the \ac{bs} is generated by a single specular reflection on a planar surface perpendicular to the \ac{bs}, such that the reflection point is uniquely determined by the geometry of the \ac{bs}, the \ac{ue}, and the reflecting plane. Since the tunnel walls are perpendicular to the \ac{bs} orientation, the $x$-coordinates of the \ac{ue} and the \ac{vue} locations coincide.

Let $\mathcal{H}$ denote the \ac{nf} \ac{ul} channel observed across the \ac{bs} array, and let $\varepsilon^{}_\Phi$ be the maximum tolerable phase error.
Then, there exists a maximum number of \ac{bs} antennas $M_{\max}$ such that:

\begin{itemize}
    \item for $M \le M_{\max}$, the channel $\mathcal{H}$ can be equivalently represented by a single reflector located on the planar surface;
    \item for $M > M_{\max}$, the channel $\mathcal{H}$ can only be represented by distinct paths converging in a \ac{va} (e.g., a \ac{vue}), single-reflector approximation no longer holds.
\end{itemize}

The threshold $M_{\max}$ is upper bounded by
\begin{equation}
\label{eq:Nmax}
M_{\max}
\;\le\;
1 + 2\sqrt{
\frac{R\,W\,\varepsilon^{}_\Phi}{\lambda\,\left|W-y_u\right|\,\pi}
},
\end{equation}
where $W$ denotes the distance between the \ac{bs} and the reflecting surface and $y_u$ is the transverse coordinate of the \ac{ue}.
\end{theorem}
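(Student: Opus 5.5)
I would set up the 2D geometry explicitly and compare two phase profiles across the \ac{ula}: the exact one produced by the \ac{vue}, and the one assumed by the \ac{srnf} model, in which every element sees the same reflection point $\mathbf p_{s,0}$. Place the reference element at the origin, the array along the transverse $y$-axis with elements at $y_m = m d$, $d=\lambda/2$, and the reflecting wall at $y=W$. The \ac{ue} is at $(R,y_u)$, so the \ac{vue} is its mirror image $(R,\,2W-y_u)$. The true path length to element $m$ is $d_m^{\mathrm{VA}}=\|(R,2W-y_u)-(0,y_m)\|$.

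In the \ac{srnf} model the wave origin is the fixed specular point $\mathbf p_{s,0}$ of the reference element. Intersecting the segment from $(0,0)$ to the \ac{vue} with $y=W$ gives $\mathbf p_{s,0}=(x_s,W)$, where $x_s = RW/(2W-y_u)$. The modeled offset is therefore $\delta^{\mathrm{SR}}_m = \|\mathbf p_{s,0}-(0,y_m)\|-\|\mathbf p_{s,0}\|$. The true offset is $\delta^{\mathrm{VA}}_m = d_m^{\mathrm{VA}}-d_0^{\mathrm{VA}}$, as in \eqref{eq:delta_model}. The model is valid when the phase mismatch satisfies $\frac{2\pi}{\lambda}|\delta^{\mathrm{VA}}_m-\delta^{\mathrm{SR}}_m|\le\varepsilon_\Phi$ for all elements. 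I would impose this at the worst-case element.

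The key step is a second-order (Fresnel) expansion of both offsets in $y_m$. For a source at distance $\rho$ and direction cosine $\sin\theta$ relative to the array axis, $\delta_m \approx -y_m\sin\theta + \frac{y_m^2\cos^2\theta}{2\rho}$. Both origins lie on the same ray from the reference element, so they share the same $\theta$. The linear (\ac{ff}) terms therefore cancel exactly, and the mismatch is purely a curvature term:
\begin{equation}
\delta^{\mathrm{VA}}_m-\delta^{\mathrm{SR}}_m \approx \frac{y_m^2\cos^2\theta}{2}\Big(\frac{1}{\rho_{\mathrm{VA}}}-\frac{1}{\rho_{\mathrm{SR}}}\Big).
\end{equation}
The ratio $\rho_{\mathrm{SR}}/\rho_{\mathrm{VA}}=W/(2W-y_u)$, and $\rho\cos\theta$ equals the longitudinal distance ($R$ and $x_s$, respectively). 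The bracket collapses to $\frac{1}{\rho_{\mathrm{VA}}}\cdot\frac{|W-y_u|}{W}$, and $\cos^2\theta/\rho_{\mathrm{VA}}\le 1/R$ up to the factor $\cos\theta\le 1$. This yields the mismatch bound
\begin{equation}
\frac{2\pi}{\lambda}\,\frac{y_m^2}{2R}\,\frac{|W-y_u|}{W}.
\end{equation}

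I would then take the worst element to lie at half-aperture from a centered phase reference, $y=(M-1)d/2=(M-1)\lambda/4$. This choice is what produces the factor $2$ and the $+1$. Requiring the mismatch to be at most $\varepsilon_\Phi$ and solving for $M$ gives $(M-1)^2\le 4RW\varepsilon_\Phi/(\lambda\pi|W-y_u|)$, up to the constant. This is \eqref{eq:Nmax}. Because the expansion discards higher-order terms and uses $\cos\theta\le1$, the result is stated as an upper bound on $M_{\max}$.

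For the dichotomy, monotonicity of the mismatch in $|y_m|$ shows that the condition holds for every element exactly when $M$ is below the threshold. Above it, the modeled \ac{srnf} phases cannot reproduce the true ones, since the distinct per-element reflection points genuinely differ. Only the \ac{vue} representation, which is exact by the image principle, remains consistent.

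The main obstacle is the bookkeeping that makes the constants come out exactly as in \eqref{eq:Nmax}. This covers the choice of reference element (edge versus center), the sign of $\cos^2\theta$ factors, and whether $R$ or the slant range appears. I would first try the centered-reference, paraxial simplification, and check it against the exact expression.
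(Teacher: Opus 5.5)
Up to the last step your argument matches the paper's proof. You place the VUE at $(R,2W-y_u)$ and the SR-NF origin at $\mathbf p_{s,0}=(RW/(2W-y_u),W)$, the specular point of the reference element. That point lies on the same ray, so the linear terms cancel and the residual is $y_m^2|W-y_u|/(2RW)$. The paper uses the paraxial expansion $\sqrt{R^2+b^2}\approx R+b^2/(2R)$. Your form $-y_m\sin\theta+y_m^2\cos^2\theta/(2\rho)$ is sharper and gives the same residual times $\cos^3\theta$.

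The step that fails is the bookkeeping you flagged. With the worst element at $y=(M-1)\lambda/4$, your mismatch bound gives $\frac{\pi\lambda|W-y_u|}{16RW}(M-1)^2\le\varepsilon_\Phi$, i.e.\ $M\le 1+4\sqrt{RW\varepsilon_\Phi/(\pi\lambda|W-y_u|)}$. The inequality $(M-1)^2\le 4RW\varepsilon_\Phi/(\lambda\pi|W-y_u|)$ that you display does not follow from that choice. The threshold you would actually obtain exceeds the stated bound.

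The reference is also not a free choice. The channel model references $\delta_{\ell,m}$ to antenna $m=0$, and the SR-NF reflector is defined as $\mathbf p_{s,0}$, the specular point of that element. So, as in your own setup, the elements sit at $y_m=md$, $m=0,\dots,M-1$, and the worst case is the far edge $y_{\max}=(M-1)d=(M-1)\lambda/2$. Then $\frac{2\pi}{\lambda}\frac{|W-y_u|}{2RW}\frac{(M-1)^2\lambda^2}{4}\le\varepsilon_\Phi$ gives exactly \eqref{eq:Nmax}, which is what the paper does. The factor $2$ comes from $d^2=\lambda^2/4$, not from halving the aperture. A centered reference would be a different SR-NF model (reflector at the center element's specular point) with a larger threshold, so it cannot produce the stated constant.

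Separately, your justification of the $\le$ runs the wrong way. Replacing the exact coefficient $\cos^3\theta/R$ (with $\cos\theta=R/\rho_{\mathrm{VA}}$) by $1/R$ overestimates the mismatch. The resulting condition is therefore only sufficient, and it bounds the true threshold from below. With your expansion the edge constraint gives $1+2\sqrt{RW\varepsilon_\Phi/(\pi\lambda|W-y_u|\cos^3\theta)}$, which is at least the stated value.

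The paper does not bound anything at this point. It uses the paraxial expansion (effectively $\cos\theta\approx1$) and solves the edge constraint for $M$. The $\le$ only records that every admissible $M$, in particular $M_{\max}$, satisfies it. If you keep your sharper expansion, present \eqref{eq:Nmax} as its paraxial limit rather than as a consequence of $\cos\theta\le1$.
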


\begin{proof}
Using the \ac{vue} construction, the \ac{nlos} path via the planar reflector at $y=W$ is equivalent to a \ac{los} path from the \ac{bs} to the \ac{vue} located at
\[
\mathbf p_v^{} = (R,\,2W-y_u).
\]
Accordingly, the exact \ac{nlos} path lengths to the reference antenna and to the $m$-th antenna are given by
\begin{align}
d^{}_{v,0} &= \sqrt{R^2 + (2W-y_u)^2},
\\
d^{}_{v,m} &= \sqrt{R^2 + (2W-y_u-y_m)^2},
\end{align}
where $y_m = m d$, with $d$ the inter-element spacing.

%$|y_m|\ll R$
For $\left|y_m\right|\ll R$, a second-order Fresnel expansion yields
\begin{equation}
\label{eq:delta_l_exact}
\delta^\ast_m=d^{}_{v,m} - d^{}_{v,0}
\approx
\frac{y_m^2 - 2(2W-y_u)y_m}{2R}.
\end{equation}

Let $\mathbf p_{s,0}=(x_{p_0}, W)$ denote the specular reflection point associated with the reference antenna.
The corresponding distances to the \ac{bs} antennas are
\begin{align}
d_{p_0,0}&=\sqrt{x_{p_0}^2+W^2},
\\
d_{p_0,m}&=\sqrt{x_{p_0}^2+(W-y_m)^2}.
\end{align}
Applying the same Fresnel approximation gives
\begin{equation}
\label{eq:delta_l_sp}
\delta^\text{SR}_m=d_{p_0,m}-d_{p_0,0}
\approx
\frac{y_m^2 - 2Wy_m}{2x_{p_0}}.
\end{equation}

For a planar reflector, the specular point is uniquely determined by geometry.
Using the \ac{vue} method, its horizontal coordinate is
\begin{equation}
\label{eq:xp_0}
x_{p_0} = R\,\frac{W}{2W-y_u}.
\end{equation}
Substituting \eqref{eq:xp_0} into \eqref{eq:delta_l_sp} and comparing with
\eqref{eq:delta_l_exact}, the linear terms in $y_m$ cancel exactly.
The resulting approximation error is therefore
\begin{equation}
\label{eq:error}
\Delta\delta_m
\triangleq
\delta^\ast_m-\delta^\text{SR}_m
\approx
-\frac{W-y_u}{2RW}\,y_m^2.
\end{equation}
The maximum phase error occurs at the array edge $y_{\max}=(M-1)\,d$.
Imposing the phase error constraint
\[
\frac{2\pi}{\lambda}\,\left|\Delta\delta_m\right| \le \varepsilon^{}_\Phi
\]
and using \eqref{eq:error} yields
\begin{equation}
\frac{2\pi}{\lambda}
\frac{\left|W-y_u\right|}{2RW}
(M-1)^2 d^2
\le \varepsilon^{}_\Phi.
\end{equation}
With $d=\lambda/2$, solving for $M$ gives \eqref{eq:Nmax}, which completes the proof.
\end{proof}

\begin{remark}[(Scaling law)]
In the worst case $\left|W-y_u\right|\sim W$, the bound in \eqref{eq:Nmax} simplifies to
\[
M_{\max} = \mathcal O\!\left(\sqrt{\frac{R}{\lambda}}\right),
\]
which coincides with the classical \ac{nf} scaling.
This scaling holds because the reflector location is fully constrained by
specular geometry; treating it as a free parameter would generally introduce a first-order error term and lead to a significantly more restrictive bound.
\end{remark}

\begin{theorem}[\textbf{Generalized validity condition for a \ac{srnf} channel model}]
Consider a \ac{bs} equipped with a \ac{ura} with $M$ as the larger dimension of the array, with inter-element spacing $d=\lambda/2$, operating at wavelength $\lambda$.
Assume that the dominant \ac{nlos} path is generated by a single specular reflection on a planar surface $S$, such that the reflection point is uniquely determined by the geometry of the \ac{bs}, the \ac{ue}, and the reflecting plane.
%, and a \ac{ue} located at $(R,y_u,z_u)$.
% defined by $y=W$
%=(R,\,2W-y_u,\,z_u)

Let $\mathcal{H}$ denote the \ac{nf} channel observed across the \ac{bs} array, and let $\varepsilon^{}_\Phi$ be the maximum tolerable phase error.
Then, there exists a maximum number of \ac{bs} antennas $M_{\max}$ such that:
\begin{itemize}
    \item for $M \le M_{\max}$, the channel $\mathcal{H}$ can be equivalently represented by a single reflector located on the planar surface;
    \item for $M > M_{\max}$, he single-reflector representation is no longer guaranteed within the phase tolerance
    %the channel $\mathcal{H}$ can only be represented by distinct paths converging in a \ac{va} (e.g., a \ac{vue}), single-reflector approximation no longer holds.
\end{itemize}
Let $\mathbf p_v^{}$ denote the \ac{vue} location and $\rho \triangleq \|\mathbf p_v^{}\|$ the
corresponding propagation distance.
% \begin{equation}
% \label{eq:rho_3d}
% \rho \triangleq \|\mathbf p_v^{}\| = \sqrt{R^2+(2W-y_u)^2+z_u^2} \;.
% \end{equation}
% Let $\mathbf p_m$ denote the position of the $m$-th antenna, and let \begin{equation}
% \label{eq:aperp}
% r^\perp_{\max} \triangleq \max_{m\in\{0,\dots,M-1\}} \| \mathbf P^\perp_{\mathbf{U}} \mathbf p_m\|
% \end{equation}
% be the maximum array aperture measured along the direction orthogonal to the propagation direction $\mathbf u \triangleq \dfrac{\mathbf p_v^{}}{\rho}$, where $\mathbf U =\mathbf u \mathbf u^\mathsf{T}$.
%$\mathbf P_\perp \triangleq \mathbf I - \mathbf u \mathbf u^\mathsf{T}$.
Then, the threshold $M_{\max}$ is upper bounded by
%, given the projected edge aperture $r^{\perp}_{\max} \le (M-1)\,d$
% \begin{equation}
% \label{eq:Nmax_3d}
% r^\perp_{\max} \;\le\; \sqrt{\frac{\lambda\,\rho\,\varepsilon^{}_\Phi}}{\pi}}.
% \end{equation}
% \eqref{eq:Nmax_3d}, it implies
\begin{equation}
\label{eq:Nmax_3d_closed}
M_{\max}
\;\le\;
1 + 2\sqrt{\frac{\rho\,\varepsilon^{}_\Phi}{\lambda\,\pi}}.
\end{equation}
\end{theorem}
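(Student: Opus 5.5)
We follow the proof of the 2D theorem, now in three dimensions and for a general plane $S$. First, apply the \ac{vue} construction: reflecting the \ac{ue} across $S$ gives $\mathbf p_v$, and the exact \ac{nlos} length to antenna $m$ at $\mathbf p_m$ is $\|\mathbf p_v-\mathbf p_m\|$. With $\rho=\|\mathbf p_v\|$ and $\mathbf u=\mathbf p_v/\rho$, a second-order Fresnel expansion for $\|\mathbf p_m\|\ll\rho$ gives
\[
\delta^\ast_m\approx -\mathbf u^\mathsf{T}\mathbf p_m+\frac{\|\mathbf p_m\|^2-(\mathbf u^\mathsf{T}\mathbf p_m)^2}{2\rho}.
\]
Let $\mathbf p_{s,0}$ be the specular point on $S$ for the reference antenna. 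It lies on the segment from the origin to $\mathbf p_v$, so $\mathbf p_{s,0}=\eta\,\mathbf p_v$ for some $\eta\in(0,1]$, and $\eta$ is fixed by intersecting that segment with $S$. The same expansion about $\mathbf p_{s,0}$ gives
\[
\delta^{\mathrm{SR}}_m\approx -\mathbf u^\mathsf{T}\mathbf p_m+\frac{\|\mathbf p_m\|^2-(\mathbf u^\mathsf{T}\mathbf p_m)^2}{2\eta\rho}.
\]
As in \eqref{eq:error}, the linear terms cancel exactly because the two sources share the direction $\mathbf u$. This cancellation is the key structural fact.

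The error is therefore purely quadratic:
\[
\Delta\delta_m\approx-\frac{1-\eta}{2\eta\rho}\,\|\mathbf P^\perp_{\mathbf u}\mathbf p_m\|^2 .
\]
In the 2D case $\eta=W/(2W-y_u)$, which recovers the factor $(W-y_u)/(RW)$. To reach a bound that depends only on $\rho$, we would bound $\|\mathbf P^\perp_{\mathbf u}\mathbf p_m\|\le\|\mathbf p_m\|\le(M-1)d$, using $M$ as the larger dimension of the \ac{ura} together with $d=\lambda/2$. We would then bound the geometric prefactor $(1-\eta)/\eta$ by a constant independent of geometry. Imposing $(2\pi/\lambda)|\Delta\delta_m|\le\varepsilon_\Phi$ at the array edge gives $\pi(M-1)^2\lambda/(4\rho)\cdot C\le\varepsilon_\Phi$, hence $M-1\le 2\sqrt{\rho\varepsilon_\Phi/(C\lambda\pi)}$. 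With $C=1$ this is \eqref{eq:Nmax_3d_closed}.

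\textbf{Main obstacle.} Justifying $C=1$ is the hardest step and is not guaranteed to hold without extra assumptions. The factor $(1-\eta)/\eta$ exceeds $1$ when $\eta<1/2$, that is, when the \ac{ue} is farther from $S$ than the \ac{bs} is. I would treat this in one of two ways. The first option is to restrict to the regime where the \ac{bs} is at least as far from $S$ as the \ac{ue} ($\eta\ge1/2$), which matches the worst case $|W-y_u|\sim W$ of the Remark and gives $(1-\eta)/\eta\le1$. The second option is to read \eqref{eq:Nmax_3d_closed} as the classical Fresnel-scale bound obtained by comparing against the far-field phase budget, i.e.\ by dropping the factor as an order-one quantity. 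I would state the first option explicitly as the working assumption and note that the second option underlies the scaling-law interpretation. The remaining steps, including the vanishing of cross terms in $\mathbf P^\perp_{\mathbf u}$ and the fact that a \ac{ura} edge element attains at most $(M-1)d$, are routine.
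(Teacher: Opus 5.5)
Your route is the paper's route step for step: \ac{vue} construction, second-order expansion of $\|\mathbf p_v-\mathbf p_m\|$, cancellation of the linear term, a quadratic residual in the projected aperture, an aperture bound, and the phase constraint with $d=\lambda/2$. The obstacle you flag is genuine, and the paper does not resolve it. Right after noting that the linear term cancels, it simply asserts $|\Delta\delta_m|\approx\frac{1}{2\rho}\|\mathbf P^\perp_{\mathbf U}\mathbf p_m\|^2$. In other words it sets your $C=1$ without ever computing the single-reflector quadratic term.

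Your computation of that term is correct. With $\mathbf p_{s,0}=\eta\,\mathbf p_v$ the prefactor is $(1-\eta)/\eta=h_u/h_b$, where $h_u$ and $h_b$ are the distances of the \ac{ue} and of the reference antenna from $S$. Its reciprocal is exactly the factor $W/|W-y_u|$ in \eqref{eq:Nmax}. Keeping it, \eqref{eq:Nmax_3d_closed} holds with $\rho$ replaced by $\rho\,\eta/(1-\eta)$, which reduces to \eqref{eq:Nmax} in the paraxial 2D case. The paper's expression is therefore exact only at $\eta=1/2$.

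Both your derivation and the paper's bound the error from above, so what they establish is a sufficient condition. $M_{\max}$ must then be read as the largest $M$ for which the tolerance is guaranteed, and that guarantee is sound only if $(1-\eta)/\eta\le1$. This is why your option~1 is the one that turns the argument into a proof. It is a real extra hypothesis, not a technicality: $h_u/h_b$ is unbounded as the anchor approaches $S$, and in the 2D setting it tends to $2$ as $y_u\to-W$. Your option~2 is what the paper implicitly does, importing the Remark's worst case $|W-y_u|\sim W$, but it is a heuristic rather than a derivation. When $\eta>1/2$ the stated bound is merely conservative; for the simulation values $W=3$, $y_u=2.5$ the prefactor is $1/6$.

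A step you call routine is not. The chain $\|\mathbf P^\perp_{\mathbf U}\mathbf p_m\|\le\|\mathbf p_m\|\le(M-1)d$ fails for a \ac{ura} whose reference element is a corner, which is the convention $y_m=md$ of the 2D theorem. The opposite corner of an $M\times N$ array lies at $\sqrt{(M-1)^2+(N-1)^2}\,d$, up to $\sqrt2(M-1)d$. The projection does not shrink it when $\mathbf u$ is close to the array normal. The paper's inequality $r^\perp_{\max}\le(M-1)d$ has the same gap. You can fix it in one of two ways:
\begin{itemize}
\item place the reference at the array centre, which gives $\|\mathbf p_m\|\le(M-1)d/\sqrt2$ and makes your inequality true;
\item carry the factor, which replaces the constant $2$ in \eqref{eq:Nmax_3d_closed} by $\sqrt2$.
\end{itemize}
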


\begin{proof}
Using the \ac{vue} construction, the single-bounce \ac{nlos} path via the planar
reflector is equivalent to a \ac{los} path from the \ac{bs} to the \ac{vue} located at $\mathbf p_v^{}$.
% \[
% \mathbf p_v^{} = (R,\,2W-y_u,\,z_u).
% \]
Let $\mathbf p_0^{}=\mathbf{0}_{3\times1}$ 
denote the reference antenna position and $\mathbf p_m$ the $m$-th antenna position.
The exact \ac{nlos} path length to antenna $m$ is
\begin{equation}
d_{v,m} = \|\mathbf p_v^{} - \mathbf p_m^{}\|
%=\|\mathbf p_v^{}-\mathbf r_m\|.
\end{equation}
Denoting $\rho=\|\mathbf p_v^{}\|$, $\mathbf u=\dfrac{\mathbf p_v^{}}{\rho}$, and $\mathbf U =\mathbf u \mathbf u^\mathsf{T}$, a second-order Fresnel/Taylor expansion for $\|\mathbf p_m\|\ll \rho$ yields
\begin{equation}
\label{eq:taylor_3d}
d_{v,m}
\approx
\rho
-
\mathbf u^\mathsf{T} \mathbf p_m
+
\frac{1}{2\rho}
\left(
\|\mathbf p_m\|^2-(\mathbf u^\mathsf{T} \mathbf p_m)^2
\right) .
%+O\!\left(\frac{\|\mathbf p_m\|^3}{\rho^2}\right).
\end{equation}
Hence, the path difference satisfies
\begin{equation}
\label{eq:delta_star_3d}
\delta_m^\ast \triangleq d_{v,m}-d_{v,0}
\approx
-\mathbf u^\mathsf{T} \mathbf p_m
+
\frac{1}{2\rho}\|\mathbf P^\perp_\mathbf{U} \mathbf p_m\|^2.
\end{equation}

Now consider the \ac{srnf} model that enforces a \emph{single}
reflector point 
%on $\mathcal S$ (equivalently, a single specular point associated with the reference antenna) 
 to generate the reflector--\ac{bs} segment for all
array elements. Under planar specular geometry, the reflector location is uniquely constrained, and the first-order (linear) term in the modeling mismatch cancels across the array, leaving a residual phase-relevant mismatch that is quadratic in the projected aperture. Consequently, neglecting the higher-order terms, the dominant approximation error is given by
\begin{equation}
\label{eq:error_3d}
\left|\Delta \delta_m\right|
\;\approx\;
\frac{1}{2\rho}\|\mathbf P^\perp_\mathbf{U} \mathbf p_m\|^2.
\end{equation}
Let \begin{equation}
\label{eq:aperp}
r^\perp_{\max} \triangleq \max_{m\in\{0,\dots,M-1\}} \| \mathbf P^\perp_{\mathbf{U}} \mathbf p_m\|
\end{equation}
be the maximum array displacement projected onto the plane orthogonal to the propagation direction, then 
\begin{equation}
\left|\Delta \delta_m\right|
\;\lesssim\; \frac{(r^\perp_{\max})^2}{2\rho}
\;\leq\;
\frac{(M-1)^2\,d^2}{2\rho}.
\end{equation}
Imposing the phase error constraint, we obtain 
\[
\frac{2\pi}{\lambda}\,\frac{(M-1)^2\,d^2}{2\rho}\le \varepsilon^{}_\Phi.
\]
% and using \eqref{eq:error_3d} yields
% \[
% \|\mathbf P^\perp_\mathbf{U} \mathbf p_m\|^2 \le \frac{\lambda\,\rho\,\varepsilon^{}_\Phi}{\pi},
% \]
% which implies 
% \begin{equation}
% \label{eq:Nmax_3d}
% r^\perp_{\max} \;\le\; \sqrt{\frac{\lambda\,\rho\,\varepsilon^{}_\Phi}{\pi}}.
% \end{equation}
% For a linear array with spacing $d=\lambda/2$,
% the projected edge aperture satisfies $r^{\perp}_{\max} \le (M-1)\,d$, which gives
% \eqref{eq:Nmax_3d_closed} and completes the proof.
With $d=\lambda/2$, solving for $M$ gives \eqref{eq:Nmax_3d_closed}, which completes the proof.
\end{proof}

\subsection{Measurement and State Models}
We propose a vehicle localization procedure by tracking the dynamic state 
\begin{equation}
\mathbf s = \begin{bmatrix} \mathbf p_u^{\mathsf T} & y_{v,1}^{} & z_{v,1}^{} & \cdots & y_{v,\widehat{L}-1} & z_{v,\widehat{L}-1} \end{bmatrix}^{\mathsf T} \in \mathbb{R}^{D},
\end{equation} 
with $D=2\widehat{L}+1$, where $\widehat{L}$ denotes the total number of tracked propagation states, i.e., the \ac{ue} (index $0$) plus its tracked \acp{vue}. The state therefore includes the \ac{ue} 3D coordinates $\mathbf p_u^{}$ and, for each $\widehat{\ell}\in\{1,\dots,\widehat{L}-1\}$, the tuple $(y_{v,\widehat{\ell}}, z_{v,\widehat{\ell}})$ of the corresponding tracked \ac{vue}. Since the tunnel walls are perpendicular to the \ac{bs} orientation, the $x$-coordinates of the \ac{ue} and the \acp{vue} coincide. 
The measurement model is defined as 
\begin{equation}
    \mathbf h(\mathbf s) =
    \begin{bmatrix}
    {\bm \phi}^{\mathsf T} &
    {\bm \psi}^{\mathsf T} &
    {\bm \kappa}^{\mathsf T} &
    \Delta \bm d^{\mathsf T}
    \end{bmatrix}^{\mathsf T},
\end{equation}
with ${\bm \phi}=\begin{bmatrix}{\phi}_0^{} & \cdots & {\phi}_{\widehat{L}-1}\end{bmatrix}^\mathsf{T}$,
${\bm \psi}=\begin{bmatrix}{\psi}_0^{} & \cdots & {\psi}_{\widehat{L}-1}\end{bmatrix}^\mathsf{T}$, 
${\bm \kappa}=\begin{bmatrix}{\kappa}_0^{} & \cdots & {\kappa}_{\widehat{L}-1}\end{bmatrix}^\mathsf{T}$, and
$\Delta{\bm d}=\begin{bmatrix}\Delta{d}_1^{} & \cdots & \Delta d_{\widehat{L}-1}\end{bmatrix}^\mathsf{T}$, where $\Delta{d}_\ell^{} = d_\ell^{}-d_0^{}$ is the single anchor \ac{tdoa}, a clock offset unbiased measurement~\cite{SA-JURE}. 

Given the generic 3D \ac{vue} coordinates $\mathbf p^{}_v = \begin{bmatrix}x^{}_u & y^{}_v & z^{}_v\end{bmatrix}$, let $\Delta y = y^{}_v - y^{}_u$ and $\Delta z = z^{}_v - z^{}_u$. Defining the normal to the reflecting plane as $\vec{\bm n} = \frac{1}{\sqrt{\Delta y^2 + \Delta z^2}}\begin{bmatrix}0 & \Delta y & \Delta z\end{bmatrix}^\mathsf{T}$, the reflector associated with the \ac{vue} is given by
\begin{equation}
    \mathbf{p}^{}_r = \left(\frac{\vec{\bm n}^\mathsf{T}\mathbf p_u^{} + \vec{\bm n}^\mathsf{T}\mathbf p_v^{}}{2\vec{\bm n}^\mathsf{T}\mathbf p_v^{}}\right) \mathbf p_v^{}.
\end{equation}
The \ac{los} measurements are related to the \ac{ue} by
\begin{align}
    \phi_0^{} &= \tan^{-1} \left(\frac{y^{}_u}{x^{}_u}\right), 
    \\ 
    \psi_0^{} &= \tan^{-1} \left(\frac{z^{}_u}{\sqrt{x^{2}_u+y^{2}_u}}\right), 
    \\
    \kappa_0^{} &= \| \mathbf p^{}_{u}\|, 
\end{align}
while, the $\ell$-path measurements are related to the \ac{ue}, the \ac{vue}, and the reflector by
\begin{align}
    \phi_\ell &= \tan^{-1} \left(\frac{y^{}_v}{x^{}_u}\right), 
    \\ 
    \psi_\ell^{} &= \tan^{-1} \left(\frac{z^{}_v}{\sqrt{x^{2}_u+y^{2}_v}}\right), 
    \\
    \kappa_\ell^{} &= \|\mathbf p^{}_{r,\ell}\|, 
    \\     
    \Delta d_\ell^{} &= \| \mathbf p^{}_{v,\ell} \| - \|\mathbf p^{}_u\|.
\end{align}
% \begin{table*}[!t]%
% \centering %
% \caption{This is sample table caption.\label{tab2}}%
% \begin{tabular*}{\textwidth}{@{\extracolsep\fill}lllll@{\extracolsep\fill}}
% \toprule
% \textbf{Col1 head} & \textbf{Col2 head}  & \textbf{Col3 head}  & \textbf{Col4 head}  & \textbf{Col5 head} \\
% \midrule
% col1 text & col2 text  & col3 text  & col4 text  & col5 text\tnote{$^\dagger$}   \\
% col1 text & col2 text  & col3 text  & col4 text  & col5 text   \\
% col1 text & col2 text  & col3 text  & col4 text  & col5 text\tnote{$^\ddagger$}   \\
% \bottomrule
% \end{tabular*}
% \begin{tablenotes}
% \item[$^\dagger$] Example for a first table footnote.
% \item[$^\ddagger$] Example for a second table footnote.
% \item {\it Source}: Example for table source text.
% \end{tablenotes}
% \end{table*}
After the \ac{cp} decomposition, we obtain for each path $\ell$ the estimated parameter vector
\begin{equation}
    \widehat{\bm{\rho}}_\ell =
    \begin{bmatrix}
    \widehat{\alpha}_\ell &
    \widehat{\phi}_\ell &
    \widehat{\psi}_\ell &
    \widehat{\kappa}_\ell &
    \widehat{d}_\ell &
    \widehat{v}_\ell
    \end{bmatrix}^{\mathsf T}.
\end{equation}
To ensure measurement reliability, the following consistency checks are applied:
\emph{(i)} if $\widehat{d}_\ell < 0$, the measurement is discarded;  \emph{(ii)} if $\widehat{\kappa}_\ell < 0$, the \ac{ff} estimate is used; \emph{(iii)} otherwise, the \ac{nf} estimate is used.
After this validation step, the measurement vector for \ac{ue} positioning is constructed by stacking the estimated angular and range-related parameters of all valid paths as
\begin{equation}
    \mathbf z =
    \begin{bmatrix}
    \widehat{\bm \phi}^\mathsf{T} &
    \widehat{\bm \psi}^\mathsf{T} &
    \widehat{\bm \kappa}^\mathsf{T} &
    \Delta\widehat{\bm d}^\mathsf{T}
    \end{bmatrix}^{\mathsf T} \in \mathbb{R}^{K},
\end{equation}
with $\widehat{\bm \phi}=\begin{bmatrix}\widehat{\phi}_0^{} & \cdots & \widehat{\phi}_{L-1}^{}\end{bmatrix}^\mathsf{T}$,
$\widehat{\bm \psi}=\begin{bmatrix}\widehat{\psi}_0^{} & \cdots & \widehat{\psi}_{L-1}^{}\end{bmatrix}^\mathsf{T}$, 
$\widehat{\bm \kappa}=\begin{bmatrix}\widehat{\kappa}_0^{} & \cdots & \widehat{\kappa}_{L-1}^{}\end{bmatrix}^\mathsf{T}$, and
$\Delta\widehat{\bm d}=\begin{bmatrix}\Delta\widehat{d}_1^{} & \cdots & \Delta\widehat{d}_{L-1}^{}\end{bmatrix}^\mathsf{T}$, in which $\Delta\widehat{d}_\ell^{} = \widehat{d}_\ell^{}-\widehat{d}_0^{}$.
Algorithm~\ref{alg:meas_extr} summarizes the measurement extraction and sanitization procedure.
\begin{algorithm}[t]
\caption{\textsc{MeasurementExtraction}}
\begin{algorithmic}[1]
% \STATE $\{\widehat{\bm{\rho}}_\ell\} \leftarrow \mathrm{CPD}(\mathcal{Y}_k, L)$
% \STATE $\mathcal{L}_k = \{\ell : \widehat{d}_\ell \ge 0\}$
% \STATE Select NF/FF estimates
% \STATE $\mathbf{z}_k = [\widehat{\boldsymbol\phi},\widehat{\boldsymbol\psi},\widehat{\boldsymbol\kappa},\Delta\widehat{\mathbf d}]$
\Require $\tilde{\mathcal{H}},\,L$

\State $\{\widehat{\mathbf{B}}_s,\widehat{\mathbf{B}}_f,\widehat{\mathbf{B}}_t\}\leftarrow \mathrm{CPD}(\tilde{\mathcal{H}},L)$
\State estimate $\{\widehat{\alpha}_\ell,\widehat{d}_\ell,\widehat{v}_\ell,\widehat{\phi}^{\mathrm{NF}}_\ell,\widehat{\psi}^{\mathrm{NF}}_\ell,\widehat{\kappa}^{\mathrm{NF}}_\ell,\widehat{\phi}^{\mathrm{FF}}_\ell,\widehat{\psi}^{\mathrm{FF}}_\ell\}_{\ell=0}^{L-1}$

    \For{$\ell=0,\ldots,L-1$}
        \If{$\widehat{d}_\ell < 0$}
            \State discard path $\ell$
        \ElsIf{$\widehat{\kappa}^{\mathrm{NF}}_\ell < 0$}
            \State $\widehat{\phi}_\ell \leftarrow \widehat{\phi}^{\mathrm{FF}}_\ell,\;
                   \widehat{\psi}_\ell \leftarrow \widehat{\psi}^{\mathrm{FF}}_\ell$
        \Else
            \State $\widehat{\phi}_\ell \leftarrow \widehat{\phi}^{\mathrm{NF}}_\ell,\;
                   \widehat{\psi}_\ell \leftarrow \widehat{\psi}^{\mathrm{NF}}_\ell,\;
                   \widehat{\kappa}_\ell \leftarrow \widehat{\kappa}^{\mathrm{NF}}_\ell$
        \EndIf
    \EndFor

    \State $\Delta\widehat d_\ell \leftarrow \widehat d_\ell-\widehat d_0,\;\forall \ell\geq 1$
    \State $\mathbf{z} \leftarrow [\widehat{\boldsymbol\phi}^{\mathsf T} \, \widehat{\boldsymbol\psi}^{\mathsf T} \,\widehat{\boldsymbol\kappa}^{\mathsf T}\,\Delta\widehat{\mathbf d}^{\mathsf T}]^{\mathsf T}$
    \State \Return $\mathbf{z}$
\end{algorithmic}
\label{alg:meas_extr}
\end{algorithm}

%\subsection{Data Association}
\subsection{Measurement -- State Track Association}
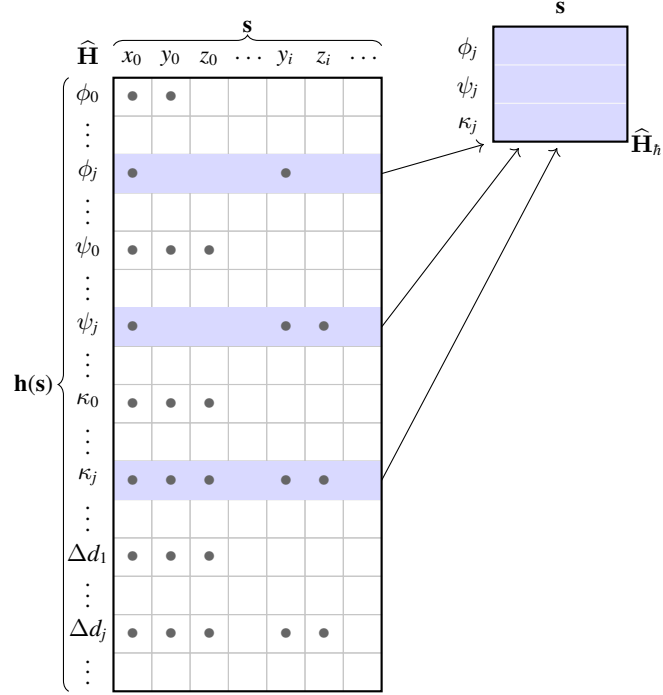
\begin{figure}[t]
    \centering
    \begin{tikzpicture}[x=0.42cm,y=0.42cm,font=\scriptsize]
        % Matrix size
        \def\W{8.4}
        \def\H{19.2}

        % Main Jacobian grid (one square per index pair)
        %\draw[thick] (0,0) rectangle (\W,\H);
        %\draw[step=0.4cm,gray!35] (0,0) grid (\W,\H);

        % Grid square size and automatic placement of row indices
        \def\wy{1.2}
        \def\wx{1.2}
        \pgfmathsetmacro{\Wx}{\W-\wx}
        \pgfmathsetmacro{\Hy}{\H-\wy}
        \foreach \y in {0,\wy,...,\Hy}
        \foreach \x in {0,\wx,...,\Wx}{\draw[gray!45] (\x,\y) rectangle (\x+\wx,\y+\wy);}

        % Measurement/state-row index labels (top-to-bottom), auto-spaced with \wy
        \foreach [count=\i from 0] \lbl in {$\phi_0$,$\phi_j$,$\psi_0$,$\psi_j$,$\kappa_0$,$\kappa_j$,$\Delta d_1$,$\Delta d_j$} {
            \pgfmathsetmacro{\yy}{\H - 0.5 - \i*2*\wy}
            \node[align=right] at (-0.8,\yy) {\lbl};
        }
        \foreach \i in {0,...,7}{
            \pgfmathsetmacro{\yy}{\H - 0.35 - \wy - \i*2*\wy}
            \node[align=right] at (-0.8,\yy) {$\vdots$};
        }

        \draw[decorate,decoration={brace,amplitude=4pt}] (-1.4,0) -- (-1.4,\H);
        \node at (-2.5,8*\wy) {$\textbf{h}(\textbf{s})$};
        % {$\phi_0$,$\vdots$,$\phi_j$,$\vdots$,$\psi_0$,$\vdots$,$\psi_j$,$\vdots$,$\kappa_0$,$\vdots$,$\kappa_j$,$\vdots$,$\Delta d_1$,$\vdots$,$\Delta d_j$,$\vdots$}

        % Header squares for state indices (top)
        %\foreach \x in {0,1.2,...,16} %{\draw[gray!45] (\x,\H) rectangle (\x+1.2,\H+1.2);}
        \node at (\wx/2,\H+0.6) {$x_0$};
        \node at (\wx/2+\wx,\H+0.6) {$y_0$};
        \node at (\wx/2+2*\wx,\H+0.6) {$z_0$};
        \node at (\wx/2+0.1+3*\wx,\H+0.6) {$\cdots$};
        \node at (\wx/2+4*\wx,\H+0.6) {$y_i$};
        \node at (\wx/2+5*\wx,\H+0.6) {$z_i$};
        \node at (\wx/2+0.1+6*\wx,\H+0.6) {$\cdots$};
        % Curly bracket grouping all state variables on top
        \draw[decorate,decoration={brace,amplitude=4pt}] (0,\H+1.0) -- (7*\wx,\H+1.0);
        \node at (3.5*\wx,\H+1.6) {$\textbf{s}$};
        % \node at (\wx/2+0.2+7*\wx,\H+0.5) {$y_{v_{\widehat{L}-1}}$};
        % \node at (\wx/2+0.2+8*\wx,\H+0.5) {$z_{v_{\widehat{L}-1}}$};

        % Matrix label
        \node at (-0.8,\H+0.8) {$\widehat{\mathbf{H}}$};

        % Highlight exactly the rows of ell measurements: phi_ell, psi_ell, kappa_ell
        \fill[blue!15] (0,13*\wy) rectangle (\W,14*\wy);   % phi_ell row
        \fill[blue!15] (0,9*\wy) rectangle (\W,10*\wy);    % kappa_ell row
        \fill[blue!15] (0,5*\wy) rectangle (\W,6*\wy);    % psi_ell row

        % Re-draw grid lines on top of highlights
        %\draw[step=1cm,gray!35] (0,0) grid (\W,\H);
        \draw[thick] (0,0) rectangle (\W,\H);

        \node (Hl) at (\W+3.5,\H-2){};
        \draw[thick, fill=blue!15] (Hl) rectangle ($(Hl)+(3.5*\wx, 3*\wy)$);
        \node at ($(Hl)+(4*\wx, 0)$) {$\widehat{\mathbf{H}}_{\hbar}$};
        \draw[->] (\W,14*\wy-\wy/2)--(Hl);
        \draw[->] (\W,10*\wy-\wy/2)--($(Hl)+(0.8,-0.2)$);
        \draw[->] (\W,6*\wy-\wy/2)--($(Hl)+(2,-0.2)$);

        \node at ($(Hl)+(3.5*\wx/2,3.5*\wy)$) {$\textbf{s}$};
        \node at ($(Hl)+(-0.8,0.5)$) {$\kappa_j$};
        \node at ($(Hl)+(-0.8,0.5+\wy)$) {$\psi_j$};
        \node at ($(Hl)+(-0.8,0.5+2*\wy)$) {$\phi_j$};
        \draw [gray!5] ($(Hl)+(0.05, \wy)$)--($(Hl)+(3.5*\wx-0.05, \wy)$);
        \draw [gray!5] ($(Hl)+(0.05, 2*\wy)$)--($(Hl)+(3.5*\wx-0.05, 2*\wy)$);

        \foreach \i in {1,3,...,15}{
            \pgfmathsetmacro{\yy}{\wy/2+\i*\wy}
            \node[black!60] at (\wx/2,\yy) {$\bullet$};
        }
        %Phi
        \node[black!60] at (\wx/2+\wx,15*\wy+\wy/2) {$\bullet$};
        \node[black!60] at (\wx/2+4*\wx,13*\wy+\wy/2) {$\bullet$};
        %Psi
        \node[black!60] at (\wx/2+\wx,11*\wy+\wy/2) {$\bullet$};    
        \node[black!60] at (\wx/2+2*\wx,11*\wy+\wy/2) {$\bullet$};
        \node[black!60] at (\wx/2+4*\wx,9*\wy+\wy/2) {$\bullet$};    
        \node[black!60] at (\wx/2+5*\wx,9*\wy+\wy/2) {$\bullet$}; 
        %Kappa
        \node[black!60] at (\wx/2+\wx,7*\wy+\wy/2) {$\bullet$};    
        \node[black!60] at (\wx/2+2*\wx,7*\wy+\wy/2) {$\bullet$};    
        \node[black!60] at (\wx/2+\wx,5*\wy+\wy/2) {$\bullet$};    
        \node[black!60] at (\wx/2+2*\wx,5*\wy+\wy/2) {$\bullet$}; 
        \node[black!60] at (\wx/2+4*\wx,5*\wy+\wy/2) {$\bullet$};    
        \node[black!60] at (\wx/2+5*\wx,5*\wy+\wy/2) {$\bullet$}; 
        %Dd
        \node[black!60] at (\wx/2+\wx,3*\wy+\wy/2) {$\bullet$};    
        \node[black!60] at (\wx/2+2*\wx,3*\wy+\wy/2) {$\bullet$};    
        \node[black!60] at (\wx/2+\wx,\wy+\wy/2) {$\bullet$};    
        \node[black!60] at (\wx/2+2*\wx,\wy+\wy/2) {$\bullet$}; 
        \node[black!60] at (\wx/2+4*\wx,\wy+\wy/2) {$\bullet$};    
        \node[black!60] at (\wx/2+5*\wx,\wy+\wy/2) {$\bullet$};

        % Row tags
        % \node[right] at (\W+0.2,13*\w+0.5) {$\phi_i$ row};
        % \node[right] at (\W+0.2,9*\w+0.5) {$\psi_i$ row};
        % \node[right] at (\W+0.2,5*\w+0.5) {$\kappa_i$ row};
    \end{tikzpicture}
    \caption{Visualization of the Jacobian matrix structure used for data association, highlighting the $(i,j)$ track--measurement pair and the corresponding submatrix construction. The black dots represent the non-zero measurement contributions to the respective state elements.}
    \label{fig:H_matrix_grid}
\end{figure}
To associate each measurement with the corresponding state track, we adopt a gated \ac{nn} strategy based on the Mahalanobis distance.
Let $\widehat{\mathbf{s}}$ and $\widehat{\mathbf{P}}$ denote the predicted state and covariance at the current time step; for simplicity, here the time index is omitted. The measurement model is linearized around $\widehat{\mathbf{s}}$ through the Jacobian
\begin{equation}
    \widehat{\mathbf{H}} = \left.\frac{\partial \mathbf{h}(\mathbf{s})}{\partial \mathbf{s}}\right|_{\substack{\boldsymbol{s}=\widehat{\boldsymbol{s}}}} \in \mathbb{R}^{K \times D}.
\end{equation}
Figure~\ref{fig:H_matrix_grid} illustrates the structure of $\widehat{\mathbf{H}}$ and the row-selection mapping used to extract $\widehat{\mathbf{H}}_{\hbar}$ for each candidate pair.

For association, we evaluate every candidate track--measurement pair. Let $i\in \{0,\dots,\widehat{L}-1\}$ denote a predicted track and $j$ a candidate measurement vector. For measurement $j\in\{0,\dots,L-1\}$, we extract the three rows associated with $\big(\phi_j,\,\psi_j,\,\kappa_j\big)$ and express their row indices with $\hbar$, yielding the submatrix
\(
    \widehat{\mathbf{H}}_{\hbar} \in \mathbb{R}^{3 \times D}.
\)
The innovation covariance for pair $(i,j)$ is
\begin{equation}
    \mathbf{S}_{i,j} = \widehat{\mathbf{H}}_{\hbar}^{}\,\widehat{\mathbf{P}}_i^{}\,\widehat{\mathbf{H}}_{\hbar}^{\mathsf T} + \mathbf{R}_{\hbar}^{} \in \mathbb{R}^{3\times 3},
\end{equation}
where $\mathbf{R}_{\hbar} \in \mathbb{R}^{3\times 3}$ is the measurement noise covariance matrix associated with $\big(\phi_j,\,\psi_j,\,\kappa_j\big)$. The corresponding innovation vector is
\begin{equation}
    \mathbf{y}_{i,j}^{} = \mathbf{z}_{\hbar}^{} - \mathbf{h}_{\hbar}^{(i)}(\widehat{\mathbf{s}}),
\end{equation}
where $\mathbf{h}_{\hbar}^{(i)}$ selects the $\hbar$ measurements of the $i$-track of the measurement model,
and the squared Mahalanobis distance is
\begin{equation}
    \eta_{i,j}^{2} = \mathbf{y}_{i,j}^{\mathsf T}\,\mathbf{S}_{i,j}^{-1}\,\mathbf{y}_{i,j}.
\end{equation}

Under Gaussian assumptions, $\eta_{i,j}^2$ follows a chi-square distribution with three degrees of freedom. A validation gate is therefore defined as
\begin{equation}
    \eta_{i,j}^2 < \chi^{2}_3(p),
\end{equation}
where $p$ is the gating probability. Only track--measurement pairs satisfying this condition are considered feasible. Among the validated candidates, the final association is obtained by selecting the pair with minimum Mahalanobis distance, resulting in a maximum-likelihood consistent \ac{nn} assignment.
Algorithm~\ref{alg:data_assoc} reports the data association procedure.
\begin{algorithm}[t]
\caption{\textsc{MeasurementAssociation}}
\begin{algorithmic}[1]
\Require $\mathbf{z},\,\widehat{\mathbf{s}},\,\widehat{\mathbf{P}},\,p$
\State $\widehat{\mathbf{H}} = \left.\frac{\partial \mathbf{h}}{\partial \mathbf{s}}\right|_{\substack{\mathbf s =\widehat{\mathbf s}}}$
\For{each $(i,j)$}
    \State $\hbar \leftarrow \textsc{Index}(\mathbf h, (\phi_j,\psi_j,\kappa_j))$
    \State $\mathbf{y}_{i,j}^{} \leftarrow \mathbf{z}_{\hbar}^{} - \mathbf{h}_{\hbar}^{(i)} (\widehat{\mathbf{s}})$
    \State $\mathbf{S}_{i,j} \leftarrow \widehat{\mathbf{H}}_{\hbar}^{} \widehat{\mathbf{P}}\widehat{\mathbf{H}}_{\hbar}^\mathsf{T} + \mathbf{R}_{\hbar}^{}$
    \State $\eta^2_{i,j} \leftarrow \mathbf{y}_{i,j}^\mathsf{T} \mathbf{S}_{i,j}^{-1} \mathbf{y}_{i,j}^{}$
\EndFor
\State $\mathcal{O} \leftarrow \{(i,j): \eta^2_{i,j} < \chi^2_3(p)\}$
\State \Return $\mathcal{O}$
\end{algorithmic}
\label{alg:data_assoc}
\end{algorithm}

\subsection{Track Management}
The pairing procedure returns the sets of associated and unassociated indices for both measurements and state tracks. 
Tracks associated with measurements are updated, while unassociated tracks are maintained but not updated. Specifically, unassociated tracks are kept alive up to a predefined maximum number of consecutive missed associations $\zeta$. This is equivalent to deterministically assigning a track existence probability, which is equal to $1$ for associated and updated tracks, and linearly decreases to $0$ after $\zeta$ consecutive time steps without association. When this condition is met, the track is removed from the state vector, with the exception of the \ac{ue} state, which is always preserved.
The unassociated measurement indices are instead used to initialize new tracks. 
The corresponding state estimate is initialized by exploiting the associated $\hbar$ measurement parameters and the current \ac{ue} estimate $\widehat{x}_u$, with an appropriately large initial covariance to account for initialization uncertainty.

After the birth-death management step, the associated measurement indices are reordered to ensure consistency with the state vector structure. In particular, the measurement associated with the first track (corresponding to the \ac{ue}) is labeled as $\ell_{\text{LoS}}$.
If the \ac{ue} state is not associated with any measurement, the LoS path is identified according to the geometric consistency condition
\begin{equation}
    \ell_{\text{LoS}} = \arg\min_\ell \left|\widehat{d}_\ell - \widehat{\kappa}_\ell \right|,
\end{equation}
subject to
\begin{equation}
     1-\gamma \leq 
     \frac{\widehat{d}_{\ell_{\text{LoS}}}}{\widehat{\kappa}_{\ell_{\text{LoS}}}}
     \leq 1+\gamma,
     \label{eq:los_condition}
\end{equation}
where $0 < \gamma \leq 1$ is a design threshold.
If condition~\eqref{eq:los_condition} is not satisfied, the scenario is treated as \ac{nlos}.
Moreover, by exploiting the Doppler shift $v_\ell$ associated with each propagation path, it is possible to discriminate between static and dynamic reflectors. Given the ego-vehicle velocity, the Doppler contribution of paths reflected by static objects can be predicted. Therefore, paths whose Doppler is consistent with this prediction are associated with static reflectors, whereas significant deviations indicate dynamic reflectors, which are treated as clutter.  While these paths can still contribute to localization at the current time step, they are not propagated to subsequent ones.
Algorithm~\ref{alg:track_manage} outlines the comprehensive track management approach.
\begin{algorithm}[t]
\caption{\textsc{TrackManagement}}
\begin{algorithmic}[1]
    \Require $\mathbf z,\, \widehat{\mathbf s},\, \widehat{\mathbf P},\, \mathcal{O},\, \mathbf P_{0\left|0\right.},\,\zeta,\,\gamma$
    \For{each track $i \notin \mathcal{O}$}
        \State $c_i \leftarrow c_i+1$
        \If{$c_i\geq \zeta$ and $i\neq 0$}
            \State remove track $i$ from $\widehat{\mathbf{s}}$ and $\widehat{\mathbf{P}}$
        \EndIf
    \EndFor

    \For{each measurement $j\notin \mathcal{O}$}
        \State $\widehat{\mathbf{s}}^{\mathrm{new}} \leftarrow \textsc{Intersection}(\widehat{\phi}_j,\widehat{\psi}_j, \widehat{x}_{u})$
        \State append $\widehat{\mathbf{s}}^{\mathrm{new}}$ to $\widehat{\mathbf{s}}$
        \State append $\mathbf{P}_{0\left|0\right.}$ to $\widehat{\mathbf{P}}$
    \EndFor

    \If{$(0,j) \in \mathcal{O}$}
        \State $\ell_{\mathrm{LoS}}\leftarrow j$
        \State $\mathrm{LoS}\leftarrow 1$ 
    \Else
        \State $\ell_{\mathrm{LoS}}\leftarrow \arg\min_{\ell}\left|\widehat d_\ell-\widehat\kappa_\ell\right|$
        \If{$1-\gamma \leq \dfrac{\widehat d_{\ell_{\mathrm{LoS}}}}{\widehat\kappa_{\ell_{\mathrm{LoS}}}} \leq 1+\gamma$}
            \State $\mathrm{LoS}\leftarrow 1$ 
        \Else
            \State $\mathrm{LoS}\leftarrow 0$ 
        \EndIf
    \EndIf
    \State $\mathbf z \leftarrow \textsc{Sort}(\mathbf z,\mathcal{O})$ 
    \State \Return $\mathbf{z},\, \widehat{\mathbf s},\, \widehat{\mathbf P},\, \mathrm{LoS}$
\end{algorithmic}
\label{alg:track_manage}
\end{algorithm}

\subsection{Adaptive Tracking Filter}
The tracking filter is implemented as an \ac{ekf} with a variable state dimension, adapting to the dynamic birth and death of tracks. The state vector $\mathbf{s}_k$ at time step $k$ includes the \ac{ue} position and the set of active \acp{vue}. 
The filter operates in two stages, namely prediction and update.
\subsubsection*{Prediction}
The state evolution is modeled as
\begin{equation}
\widehat{\mathbf{s}}_{k\left|k-1\right.} = \mathbf{f}(\widehat{\mathbf{s}}_{k-1\left|k-1\right.}, \nu_{k-1},\theta_{k-1}) + \mathbf{w}_k,
\end{equation}
where $f(\cdot)$ is a non-linear function describing the state evolution, $\nu_{k-1}$ and $\theta_{k-1}$ are the speed and heading, respectively, and $\mathbf{w}_k \sim \mathcal{N}(\mathbf{0},\mathbf{Q}_k)$ is the process noise.
For the \ac{ue}, a velocity sensor model is adopted, while each \ac{vue} follows a random walk model~\cite{Gustafsson2005}. Accordingly, the predicted covariance is given by
\begin{equation}
\widehat{\mathbf{P}}_{k\left|k-1\right.}^{} = \mathbf{F}_k^{} \widehat{\mathbf{P}}_{k-1\left|k-1\right.}^{} \mathbf{F}_k^{\mathsf T} + \mathbf{Q}_k^{},
\end{equation}
where $\mathbf{F}_k^{}$ is the state transition Jacobian.

\subsubsection*{Update}

Given the measurement vector $\mathbf z_k^{}$ and the nonlinear measurement model described in the previous section, the innovation is computed as
\begin{equation}
\mathbf{y}_k^{} = \mathbf z_k^{} - \mathbf{h}(\widehat{\mathbf{s}}_{k\left|k-1\right.}^{}).
\end{equation}
The innovation covariance is
\begin{equation}
\mathbf{S}_k = \widehat{\mathbf{H}}_k^{} \widehat{\mathbf{P}}_{k\left|k-1\right.}^{} \widehat{\mathbf{H}}_k^{\mathsf T} + \mathbf{R}_k^{},
\end{equation}
where $\mathbf{R}_k^{} \in \mathbb{R}^{K\times K}$ is the measurement noise covariance matrix.
The Kalman gain is then given by
\begin{equation}
\mathbf{K}_k^{} = \widehat{\mathbf{P}}_{k\left|k-1\right.}^{} \widehat{\mathbf{H}}_k^{\mathsf T} \mathbf{S}_k^{-1},
\end{equation}
and the state and covariance are updated as
\begin{align}
\widehat{\mathbf{s}}_{k\left|k\right.}^{} &= \widehat{\mathbf{s}}_{k\left|k-1\right.}^{} + \mathbf{K}_k^{} \mathbf{y}_k^{}, \\
\widehat{\mathbf{P}}_{k\left|k\right.}^{} &= (\mathbf{I} - \mathbf{K}_k^{} \widehat{\mathbf{H}}_k^{})\widehat{\mathbf{P}}_{k\left|k-1\right.}(\mathbf{I} - \mathbf{K}_k^{} \widehat{\mathbf{H}}_k^{})^{\mathsf T} + \mathbf{K}_k^{} \mathbf{R}_k^{} \mathbf{K}_k^{\mathsf T}.
\end{align}
Due to the birth and death processes, the state dimension varies over time. Track removal is performed by marginalizing the corresponding components from $\widehat{\mathbf{s}}_k$ and $\widehat{\mathbf{P}}_k$, while newly initialized tracks are appended with appropriate covariance initialization. This results in a flexible filtering structure capable of adapting to the time-varying multipath environment.

Algorithm~\ref{alg:javelin} summarizes the complete JAVELIN pipeline comprising measurement extraction,  data association, track management, and the variable-dimension \ac{ekf} recursion.
\begin{algorithm}[t]
\caption{JAVELIN}
\begin{algorithmic}[1]
\Require $\mathcal{X},\,\widehat{\mathbf{s}}_{0\left|0\right.},\,\widehat{\mathbf{P}}_{0\left|0\right.},\,p,\,\zeta,\,\gamma$
\For{each time step $k$}
    \State obtain $\mathcal{Y}_k$, $L_k$, $\nu_k$, $\theta_k$
    \State compute $\mathbf{Q}_k$, $\mathbf{R}_k$
    \State $\tilde{\mathcal{H}}_k \leftarrow \textsc{ChannelEstimation}(\mathcal{Y}_k, \mathcal{X})$
    \State $\mathbf{z}_k \leftarrow \textsc{MeasurementExtraction}(\tilde{\mathcal{H}}_k, L_k)$
    \State $(\widehat{\mathbf{s}}_{k\left|k-1\right.},\widehat{\mathbf{P}}_{k\left|k-1\right.})\leftarrow$ $\textsc{Predict}(\widehat{\mathbf{s}}_{k-1\left|k-1\right.}, \nu_{k-1}, \theta_{k-1},\widehat{\mathbf{P}}_{k-1\left|k-1\right.}, \mathbf{Q}_k)$
    \State $\mathcal{O}_k \leftarrow \textsc{MeasurementAssociation}(\mathbf{z}_k,\widehat{\mathbf{s}}_{k\left|k-1\right.},\widehat{\mathbf{P}}_{k\left|k-1\right.},\mathbf{R}_k,p)$
    \State $(\mathbf{z}_k, \widehat{\mathbf{s}}_{k\left|k-1\right.},\widehat{\mathbf{P}}_{k\left|k-1\right.},  \mathrm{LoS})\leftarrow$ $\textsc{TrackManagement}(\mathbf{z}_k,\widehat{\mathbf{s}}_{k\left|k-1\right.},\widehat{\mathbf{P}}_{k\left|k-1\right.}, \mathcal{O}_k, \widehat{\mathbf{P}}_{0\left|0\right.}, \zeta, \gamma)$
    %\State $\ell_{\mathrm{LoS}} \leftarrow \textsc{LoSDetection}(\mathbf{z}_k,\gamma)$
    \State $(\widehat{\mathbf{s}}_{k\left|k\right.},\widehat{\mathbf{P}}_{k\left|k\right.}) \leftarrow \textsc{Update}(\mathbf{z}_k^{}, \widehat{\mathbf{s}}_{k\left|k-1\right.},\widehat{\mathbf{P}}_{k\left|k-1\right.}, \mathbf{R}_k, \mathrm{LoS})$
\EndFor
\end{algorithmic}
\label{alg:javelin}
\end{algorithm}

\begin{figure}[t]
\centering
    \subfloat[\label{fig:s1}]
{
\begin{tikzpicture}
\filldraw[draw=black,fill=black!10]  (-3.5,0) rectangle (4,1);
\draw[dashed, draw=white, line width=1pt](-3.5,0.5) -- (4,0.5);
\node[thick,draw=amaranth,fill=amaranth!20,minimum width=0.025cm, minimum height=0.45cm, inner sep=2pt] (BS) at(0.1,0.5){};
\draw[-{latex}]  (BS.east) -- (0.7,0.5);
\draw[-{latex}]  (BS.west) -- (-0.5,0.5);
\draw[{latex}-{latex}]  (-3.5,1.15) -- node[above,midway]{ 50 m} (0.1,1.15);
\draw[{latex}-{latex}]  (0.1,1.15) -- node[above,midway]{ 50 m} (4,1.15);
\node[]at(0.1,0.855){\footnotesize BS};
\node[thick,draw=black,fill=gray!40,circle,scale=0.35](z)at(-3.1,0.6){};
\node[] at ($ (z) - (0.21,-0.25) $){$z$};
\draw[-{Triangle}]  (z) -- node[above, midway, black]{$x$} ++(0:0.5);
\draw[-{Triangle}]  (z) -- node[left, midway, black]{$y$} ++(-90:0.5);
\end{tikzpicture}
}\\
\subfloat[\label{fig:s2}]{
\begin{tikzpicture}
    \node[]at(0,0){\includegraphics[width=0.6\linewidth]{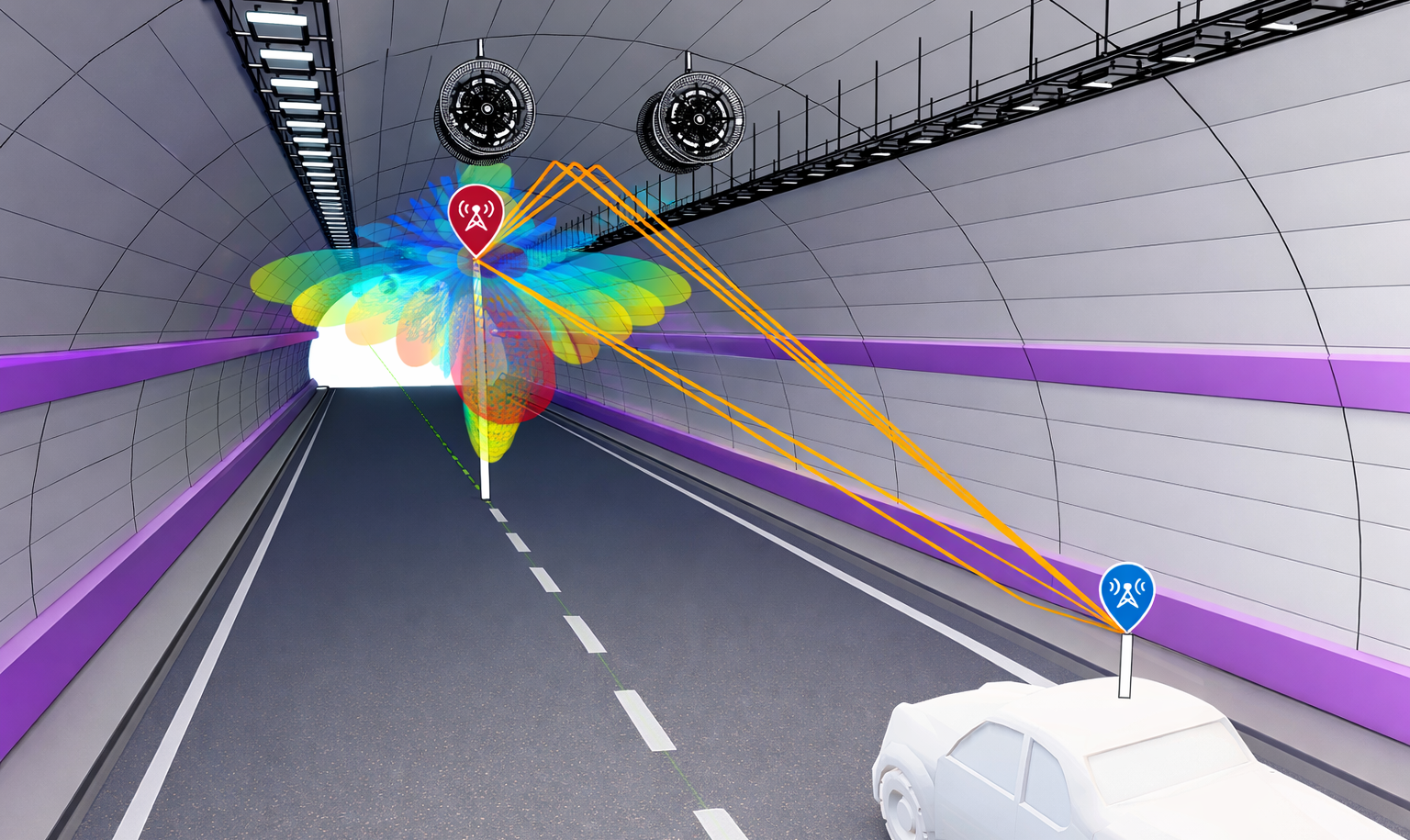}};
    \node[]at(2.6,-1.6){UE};
    \node[]at(-2.1,0.9){BS};
\end{tikzpicture}}
\caption{Implemented tunnel scenario. (a) Top view. (b) 3D representation in \textit{MATLAB Site Viewer} with a raytracer example. The red marker denotes a BS, the blue marker denotes the vehicle, and the purple stripes denote the RRMs. Beamforming gains for broadside and multipath propagation are also shown.}
\label{fig:simulation_scenario}
\end{figure}
\section{Performance Evaluation}
\label{sec:results}
\subsection{Simulation Scenario}
To validate the proposed framework, we consider a realistic vehicular tunnel modeled in \textit{Blender}$\textsuperscript{\textregistered}$, illustrated in Figure~\ref{fig:simulation_scenario}, featuring a straight semi-cylindrical geometry with a length of $100\,\mathrm{m}$, a width of $10\,\mathrm{m}$, and a height of $5\,\mathrm{m}$. The \ac{bs} is placed at the center of the tunnel at a height of $4.8$\,m, equipped with two antenna arrays oriented in opposite directions, namely $(-90,-30)$\,deg and $(90,-30)$\,deg. 
The tunnel environment is then imported into \textit{MATLAB}$\textsuperscript{\textregistered}$~\cite{matlab} via the \textit{Site Viewer}, where the \ac{ul} \ac{srs} is simulated using the \textit{5G Toolbox} and a raytracer with a single bounce. Additionally, four metallic \acp{rrm} are deployed within the tunnel: two are positioned at the junction between the sidewalk and the wall with an inclination of $55$\,deg, and two are mounted on the side walls at a height of $3.3$\,m, with a $90$\,deg orientation, located midway between the vehicle and the anchor. 
%The tunnel scenario is illustrated in Figure~\ref{fig:simulation_scenario}. 
Velocities and the trajectories of the ego vehicle are generated using the \textit{MATLAB Driving Scenario}.

\subsection{Simulation Parameters}
The simulated cellular \ac{v2x} communication system operates at a carrier frequency of $f_c = 5.9$\,GHz, with a signal bandwidth of $100$\,MHz and a transmit power of $23$\,dBm. This bandwidth choice reflects a forward-looking scenario in which wider channel allocations at $5.9$\,GHz may become available, in line with the evolution towards \ac{6g}, which is expected to support higher data-rate demanding services.
The channel is modeled as in \eqref{eq:channel_model}, accounting for time-varying multipath fading. The noise power is modeled as $N_0 = k_B \cdot BW \cdot T_e$, where $k_B$ is the Boltzmann constant, $BW$ is the bandwidth, and $T_e = T_{\text{ant}} + 290\,(N_{\text{F}} - 1)$ is the equivalent noise temperature. Here, $T_{\text{ant}} = 298$\,K denotes the antenna temperature, and $N_{\text{F}} = 5$\,dB is the noise figure~\cite{tr138857}.
The number of antennas is defined as $M = M_{\max}^2$, where $M_{\max}$ is selected according to the upper bound in \eqref{eq:Nmax}. 
For $R = 3.5$\,m, $W = 3$\,m, $y_u = 2.5$\,m, and $\varepsilon_{\Phi} = 0.15$\,rad, we obtain $M = M_{\max}^2 = 9.88^2 \approx 100$. The \ac{srs} is configured according to 3GPP Rel-16 positioning specifications, using 12 symbols per slot and a comb size of 8~\cite{italiano2025tutorial}. 

For the JAVELIN algorithm, the parameters are set to $p = 0.99$, $\zeta = 1$, and $\gamma = 0.2$.
The uncertainty parameters are defined as $\sigma_\phi=2$\,deg, $\sigma_\psi=2$\,deg, $\sigma_\kappa=1.5$\,m, $\sigma_{\Delta d}=1.5$\,m, $\sigma_\nu = 0.2$\,m/s, $\sigma_\theta = 1$\,deg, and $\sigma_P^{}=10$\,m, with the initial covariance matrix given by $P^{}_{0\left|0\right.}=\sigma_P^2\mathbf{I}^{}_D$.
The clock bias is modeled as a truncated Gaussian distribution with variance $50$\,ns and support in the interval $[-100, 100]$\,ns~\cite{camajori2023feasibility}.
We assume that the initial position (in open-sky conditions), the vehicle velocity along the tunnel, and the vehicle height are available at the \ac{bs}, which is a realistic assumption in a \ac{c-its} context leveraging \ac{cam}~\cite{cam}. In addition, the number of paths $L$ is assumed to be known, in accordance with the 3GPP Rel-17 standard~\cite{italiano2025tutorial}.

\subsection{Simulation Analyses and Results}
We evaluate the proposed framework under different visibility conditions and vehicular trajectories (straight motion and slalom maneuvers), and compare its performance with the TeNFiLoc algorithm~\cite{TeNFiLoc}. TeNFiLoc is used as a baseline since it represents a state-of-the-art snapshot NF localization method based on the same tensor decomposition framework, enabling a fair comparison and highlighting the performance gains introduced by the proposed tracking, data association, and adaptive processing components. Furthermore, we assess the impact of \acp{rrm} deployment by comparing the localization accuracy with and without their inclusion. Performance is quantified in terms of the 2D location \ac{rmse}, 2D \ac{mae}, and the $ y$-axis \ac{mae} (Y-MAE), which is relevant for lane detection.
TeNFiLoc exploits the \ac{los} measurement directly, using the tuple $(\widehat{\phi}_\ell, \widehat{\psi}_\ell, \widehat{\kappa}_\ell\, \text{or}\, \widehat{d}_\ell)$. When the \ac{los} path is not available, it estimates the user position by minimizing the following cost function:
\begin{equation}
    g(\mathbf{p}_u,\widehat{\bm\rho}_\ell) = \sum_{\ell=0}^{L-1} \omega_\ell \left(\|\mathbf{p}_u - \widehat{\mathbf{p}}_{s,\ell}\| - \widehat{d}_\ell + \widehat{\kappa}_\ell \right),
\end{equation}
where $\omega_\ell$ is a weighting factor and $\widehat{\mathbf{p}}_{s,\ell}$ denotes the estimated reflector position of the $\ell$-th path, obtained from \eqref{eq:x_l}. For convergence, the \ac{ls} algorithm requires $L > 3$ as well as perfect synchronization; therefore, it is not considered in the following results. 
Figure~\ref{fig:chan_est} illustrates an example of the estimated scatter locations (including both \ac{ue} and reflectors) derived from the \ac{srnf} channel. The scatter color encodes the altitude, highlighting the spatial distribution of the reflectors and enabling identification of the associated \acp{rrm} based on their elevation. The distribution of the scatters also reflects the measurement quality, particularly for those associated with the \ac{ue} slalom trajectory (blue line).

\begin{figure*}[t]
\centering
\includegraphics[width=0.99\linewidth]{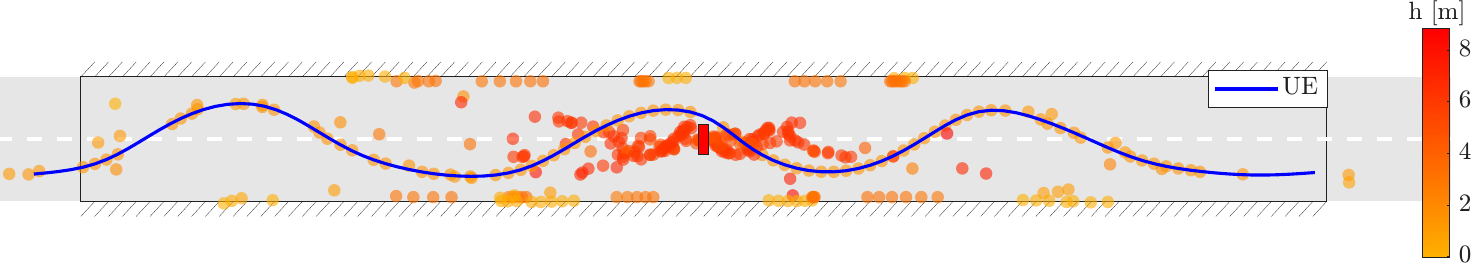}
\caption{UE and reflector position estimates derived from the \ac{srnf} channel. The blue line represents the true UE slalom trajectory, while the red rectangle denotes the BS. The scatter colormap encodes altitude, highlighting the 3D reflector positions: yellow corresponds to bottom RRMs, orange to top RRMs, and red to the ceiling.}
\label{fig:chan_est}
\end{figure*}

\begin{figure*}[t]
\centering
\subfloat[\label{fig:dritto}]{\vspace{-12pt}
\includegraphics[width=0.99\linewidth]{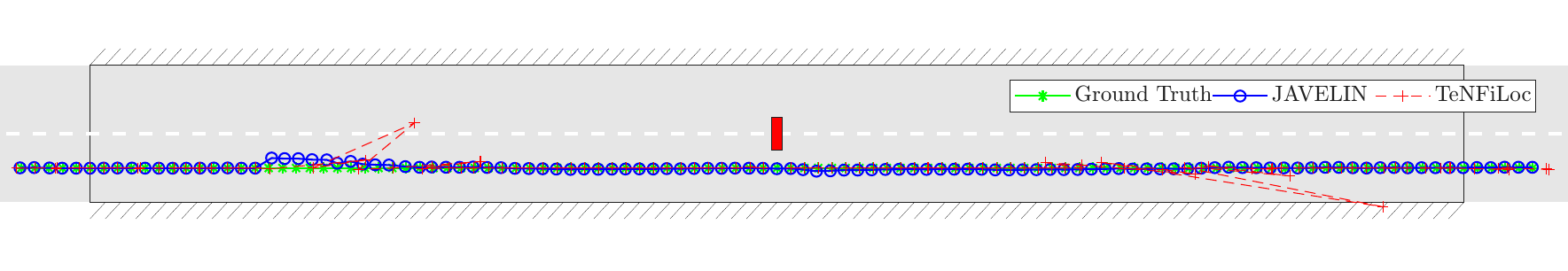}
}\\\vspace{-2pt}
\subfloat[\label{fig:slalom}]{\vspace{-12pt}
\includegraphics[width=0.99\linewidth]{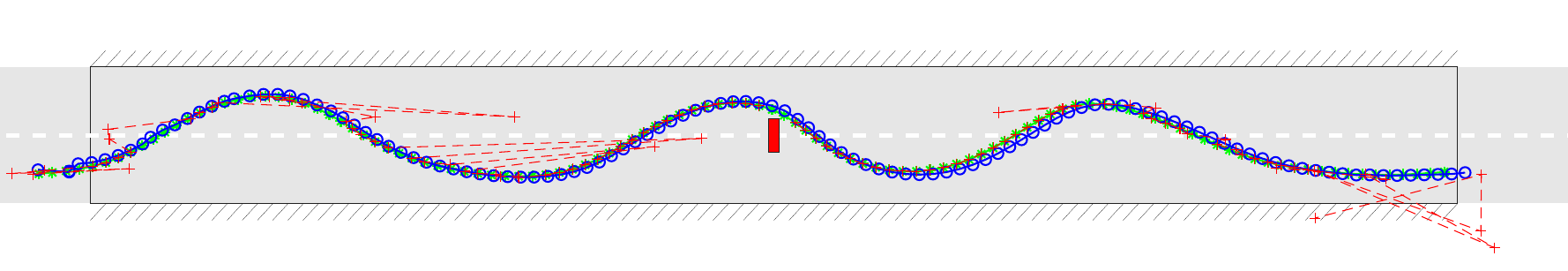}
}
\caption{Estimated vehicle positions by JAVELIN (blue circles) and TeNFiLoc (red crosses) under LoS conditions, for a straight (a) and a slalom (b) trajectory. Ground truth is indicated by green asterisks.}
\label{fig:confronto}
\end{figure*}
Figure~\ref{fig:confronto} compares JAVELIN (blue circles) and TeNFiLoc (red crosses) under \ac{los} conditions for two trajectories, namely a straight path and a slalom. As shown, JAVELIN accurately tracks the vehicle position in both cases. In contrast, TeNFiLoc achieves high accuracy when $\widehat{\kappa}_\ell$ is available; otherwise, it relies on $\widehat{d}_\ell$, which is affected by clock bias, leading to degraded performance.
Table~\ref{tab:straight_comparison} summarizes the performance metrics for the straight trajectory under different \ac{los} conditions: \ac{los} (L), partial \ac{nlos} with a 50\% probability (N@0.5), and complete \ac{nlos} (N). Additionally, the impact of removing \acp{rrm} (NoRRMs) is evaluated. The results show that JAVELIN-L achieves the highest accuracy, with the lowest 2D RMSE, 2D MAE, and Y-MAE, reaching decimeter-level accuracy. As channel conditions degrade towards partial and full \ac{nlos}, JAVELIN experiences a noticeable performance drop, with JAVELIN-N incurring higher errors than its \ac{los} counterpart, while still outperforming TeNFiLoc-L and resolving the location problem with sub-meter precision. In contrast, the absence of \acp{rrm} leads to the largest degradation, emphasizing their key role in maintaining robustness under adverse propagation conditions.
Table~\ref{tab:slalom_comparison} reports the corresponding results for the slalom trajectory, which introduces more dynamic propagation effects. While JAVELIN-L remains the most accurate solution, the overall error levels increase compared to the straight case. This behavior is expected, as the slalom motion causes rapid changes in the environment, making it more challenging to consistently track \acp{vue}. Consequently, the performance gap between \ac{los} and \ac{nlos} conditions becomes more pronounced, with JAVELIN-N showing a significant degradation. Nevertheless, it continues to outperform TeNFiLoc-L in terms of 2D metrics. These findings further highlight the effectiveness of the proposed framework, even in highly dynamic and challenging scenarios.

Figure~\ref{fig:cdf} further corroborates these findings by showing the \ac{cdf} of the 2D localization error for both trajectories and different configurations. JAVELIN-L consistently achieves the best performance, with a steeper \ac{cdf} and a higher concentration of low-error estimates, particularly in the straight trajectory scenario. The performance gap becomes more evident at lower error thresholds, where JAVELIN-L significantly outperforms TeNFiLoc-L. In the slalom case, all methods exhibit a broader error distribution due to the increased dynamics of the environment; however, the proposed framework maintains a clear advantage. The degradation observed for JAVELIN-N and the configuration without \acp{rrm} is also reflected in the heavier tails of their distributions, indicating a higher probability of large localization errors. Overall, these results confirm the effectiveness and robustness of the proposed approach across different propagation conditions and motion patterns, remarking the beneficial effects of \acp{rrm} deployment.
% \begin{table*}[!t]%
% \centering %
% \caption{Performance Metric Comparison.\label{tab:comparison}}%
% \begin{tabular*}{\textwidth}{@{\extracolsep\fill}lllll@{\extracolsep\fill}}
% \toprule
%  & \textbf{JAVELIN LoS Straight}  & \textbf{JAVELIN LoS Slalom}  & \textbf{TeNFiLoc LoS Straight}  & \textbf{TeNFiLoc LoS Slalom} \\
% \midrule
% col1 text & col2 text  & col3 text  & col4 text  & col5 text\tnote{$^\dagger$}   \\
% col1 text & col2 text  & col3 text  & col4 text  & col5 text   \\
% col1 text & col2 text  & col3 text  & col4 text  & col5 text\tnote{$^\ddagger$}   \\
% \bottomrule
% \end{tabular*}
% \begin{tablenotes}
% \item[$^\dagger$] Example for a first table footnote.
% \item[$^\ddagger$] Example for a second table footnote.
% \item {\it Source}: Example for table source text.
% \end{tablenotes}
% \end{table*}

\begin{center}
\begin{table*}[ht]%
\caption{Performance metrics comparison on straight trajectory.\label{tab:straight_comparison}}
\begin{tabular*}{\textwidth}{@{\extracolsep\fill}lccccc@{}}
\toprule
% &\multicolumn{5}{@{}c}{\textbf{straight trajectory}}  \\\cmidrule{2-6}
 & \textbf{JAVELIN-L} & {\textbf{JAVELIN-N@0.5}} & {\textbf{JAVELIN-N}} & \textbf{TeNFiLoc-L} &  \textbf{JAVELIN-NoRRMs}  \\
\midrule
\textbf{2D RMSE [m]} & 0.20  & 0.75 & 0.88  & 2.03  & 2.26 \\
\textbf{2D MAE [m]} & 0.14  & 0.63  & 0.66 & 0.73  & 1.34 \\
\textbf{Y-MAE [m]} & 0.09  & 0.17  & 0.41  & 0.13  & 0.16 \\
\bottomrule
\end{tabular*}
\end{table*}
% \end{center}
% \begin{center}
\begin{table*}[ht]%
\caption{Performance metrics comparison on slalom trajectory.\label{tab:slalom_comparison}}
\begin{tabular*}{\textwidth}{@{\extracolsep\fill}lcccc@{}}
\toprule
% &\multicolumn{4}{@{}c}{\textbf{Slalom Trajectory}}  \\\cmidrule{2-5}
 & \textbf{JAVELIN-L} & {\textbf{JAVELIN-N@0.5}} & {\textbf{JAVELIN-N}} & \textbf{TeNFiLoc-L} \\
\midrule
\textbf{2D RMSE [m]} & 0.41  & 0.56 & 1.38  & 4.76  \\
\textbf{2D MAE [m]} & 0.36  & 0.47  & 1.03 & 1.56  \\
\textbf{Y-MAE [m]} & 0.14  & 0.23  & 0.37  & 0.14  \\
\bottomrule
\end{tabular*}
\end{table*}
\end{center}

\begin{figure}
    \centering
    \includegraphics[width=0.6\linewidth]{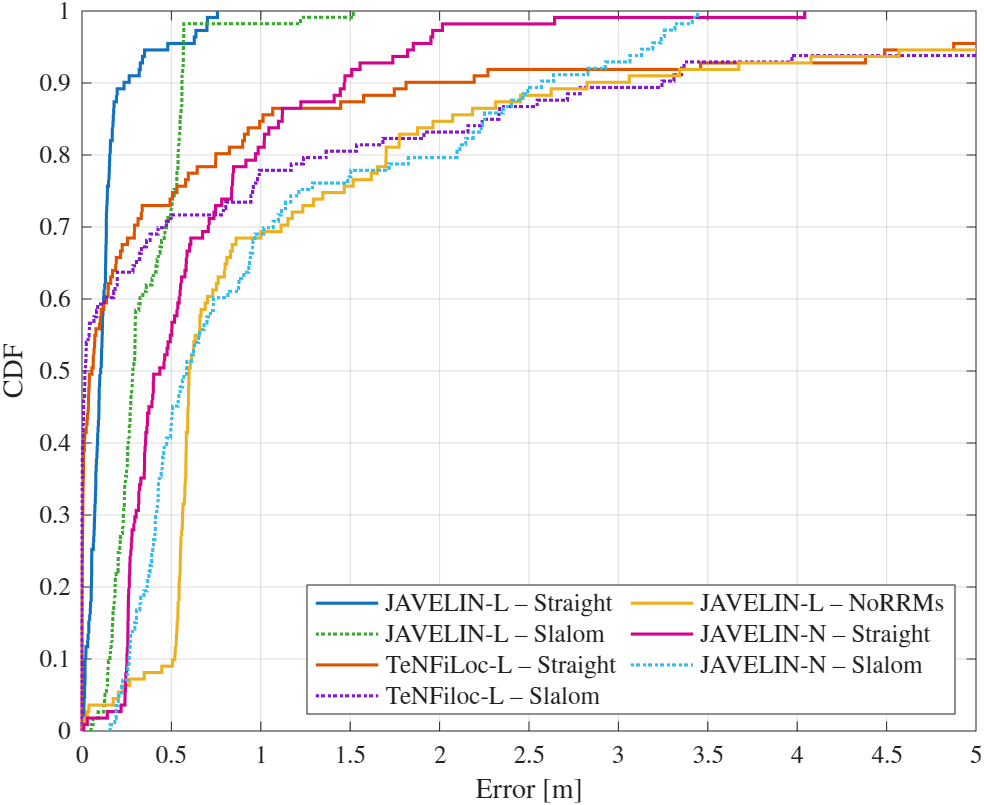}
    \caption{2D localization error CDF for straight and slalom trajectories under different propagation conditions and configurations.}
    \label{fig:cdf}
    %\vspace{-50pt}
\end{figure}

% \begin{center}
% \begin{table*}[!h]%
% \caption{Performance metrics comparison.\label{tab:comparison}}
% \begin{tabular*}{\textwidth}{@{\extracolsep\fill}lcccccc@{}}
% \toprule
% &\multicolumn{4}{@{}c}{\textbf{Straight Trajectory}} & \multicolumn{2}{@{}c}{\textbf{Slalom Trajectory}} \\\cmidrule{2-5}\cmidrule{6-7}
%  & \textbf{JAVELIN-L} & {\textbf{JAVELIN-N@0.5}} & {\textbf{JAVELIN-N}} & \textbf{TeNFiloc-L} &  \textbf{JAVELIN}  & \textbf{TeNFiloc}  \\
% \midrule
% \textbf{2D RMSE [m]} & 0.20  & 0.75 & 0.88  & 2.03  & 0.41 & 3.82 \\
% \textbf{2D MAE [m]} & 0.14  & 0.63  & 0.66 & 0.73  & 0.35  & 1.43 \\
% \textbf{Y-MAE [m]} & 0.09  & 0.17  & 0.41  & 0.13  & 0.14 &  0.24 \\
% \bottomrule
% \end{tabular*}
% % \begin{tablenotes}%%[341pt]
% % \item[$^{\rm a}$] Example for a first table footnote.
% % \item[$^{\rm b}$] Example for a second table footnote.
% % \item {\it Source}: Example for table source text.
% % \end{tablenotes}
% \end{table*}
% \end{center}

\section{Conclusions and Future Work}
\label{sec:conclusions}

This paper investigated a \ac{6g}-oriented approach for single-anchor vehicular localization using the cellular \ac{v2x} technology, exploiting \ac{nf} propagation and passive \acp{rrm} in tunnel environments. We first established an analytical validity condition for the use of an approximated \ac{srnf} channel model, providing an upper bound on the array size at the \ac{bs} beyond which multipath propagation can be consistently interpreted as a single reflector. This result reveals a direct connection between geometric consistency and Fresnel-region scaling, offering important design insights for practical deployments.
Building on this theoretical foundation, we proposed JAVELIN, a single-anchor precise vehicle localization system that combines tensor-based parameter estimation, adaptive \ac{nf}/\ac{ff} processing, and recursive Bayesian tracking with data association and track management. The integration of angular, delay difference, and curvature measurements within a variable-dimension \ac{ekf} enables robust tracking without requiring prior knowledge of the environment.
Simulation results in realistic tunnel scenarios demonstrated that the proposed approach achieves high vehicle localization accuracy under different propagation conditions and motion patterns. In particular, JAVELIN consistently outperforms state-of-the-art single-anchor methods, while maintaining robustness in challenging \ac{nlos} conditions. Furthermore, the introduction of \acp{rrm} was shown to significantly enhance geometric diversity and improve positioning performance, especially in degraded visibility conditions, highlighting their role as a key enabler of scalable and cost-efficient future \ac{c-its} infrastructure.
Overall, the JAVELIN framework aligns with the \ac{6g} vision of positioning-as-a-service embedded in the cellular network, demonstrating how \ac{nf} propagation and smart environments can enable scalable, infrastructure-efficient localization solutions.

Future work will focus on several research directions. First, the extension to real-world experimental validation is a key step to assess the impact of hardware impairments, channel estimation errors, and model mismatches. Second, moving beyond the \ac{srnf} channel model and investigating its impact on the proposed framework is needed to address generalization concerns. Third, the joint optimization of reflector placement and network deployment represents an interesting avenue to maximize localization performance while minimizing infrastructure cost. Additionally, extending the framework to multi-user and cooperative scenarios could enable information sharing among vehicles, further enhancing accuracy and reliability. Finally, integrating emerging \ac{6g} positioning features with additional onboard sensors (e.g., LiDAR or IMU) is envisioned as a unified solution for resilient, seamless, high-precision vehicular localization in complex environments.

%\backmatter
\bmsection*{Author contributions}
Lorenzo Italiano analyzed the literature, designed the methodology, performed the simulation, prepared the figures, and wrote the main manuscript. Mattia Brambilla and Monica Nicoli designed the methodology, analyzed the results, and revised the manuscript. All authors have read and agreed to the published version of the manuscript

\bmsection*{Acknowledgments}
This work was supported by the European Union—NextGenerationEU under the National Sustainable Mobility Center (Grant CN00000023), and by the Italian Ministry of University and Research (MUR) Decree n. 352–09/04/2022.

\bmsection*{Financial disclosure}

None reported.

\bmsection*{Conflict of interest}

The authors declare no potential conflict of interest.

\bibliography{wileyNJD-AMA}

\nocite{*}% Show all bib entries - both cited and uncited; comment this line to view only cited bib entries;

% \bmsection*{Author Biography}

% \begin{biography}{\includegraphics[width=76pt,height=76pt,draft]{empty}}{
% {\textbf{Author Name.} Please check with the journal's author guidelines whether
% author biographies are required. They are usually only included for
% review-type articles, and typically require photos and brief
% biographies for each author.}}
% \end{biography}

\end{document}